\documentclass[11pt]{article}

\usepackage{amsthm,amsmath,amssymb}
\usepackage[bookmarks=true,hypertexnames=false,pagebackref]{hyperref}
\hypersetup{colorlinks=true, citecolor=blue, linkcolor=red,
  urlcolor=blue}
\usepackage{newpxtext}
\usepackage[libertine,vvarbb]{newtxmath}

\usepackage{fullpage}
\usepackage{hyperref}
\usepackage{cleveref}
\usepackage[utf8]{inputenc}
\usepackage{framed}
\usepackage{enumerate}
\usepackage{graphicx}
\usepackage{float} 
\usepackage{subfigure}

\usepackage[ruled,vlined]{algorithm2e}
\usepackage{algorithmic}

\usepackage{tikz}
\usepackage{complexity}
\usepackage{ulem}

\newtheorem{theorem}{Theorem}[section]
\newtheorem{claim}[theorem]{Claim}

\newtheorem{lemma}[theorem]{Lemma}
\newtheorem{definition}[theorem]{Definition}
\newtheorem{conjecture}[theorem]{Conjecture}

\newcommand{\ent}{\mathcal{H}}

\newcommand{\gip}{\mathrm{GIP}}

\newcommand{\GHM}{\text{HM}}
\newcommand{\rv}[1]{\boldsymbol{#1}}

\title{Exponential Separation of Quantum and Classical One-Way Numbers-on-Forehead Communication}

\begin{document}

\author{
Guangxu Yang \thanks{Research supported by NSF CAREER award 2141536.\\ Thomas Lord Department of Computer Science, University of Southern California.  \\ Email: \texttt{\{guangxuy\}@usc.edu}} \\
\and
Jiapeng Zhang \footnotemark[1]
}

\maketitle
\begin{abstract}
Numbers-on-Forehead (NOF) communication model is a central model in communication complexity. As a restricted variant, one-way NOF model is of particular interest. Establishing strong one-way NOF lower bounds would imply circuit lower bounds, resolve well-known problems in additive combinatorics, and yield wide-ranging applications in areas such as cryptography and distributed computing. However, proving strong lower bounds in one-way NOF communication remains highly challenging; many fundamental questions in one-way NOF communication remain wide open. One of the fundamental questions, proposed by Gavinsky and Pudlák (CCC 2008), is to establish an explicit exponential separation between quantum and classical one-way NOF communication.

In this paper, we resolve this open problem by establishing the first exponential separation between quantum and randomized communication complexity in one-way NOF model. Specifically, we define a lifted variant of the Hidden Matching problem of Bar-Yossef, Jayram, and Kerenidis (STOC 2004) and show that it admits an ($O(\log n)$)-cost quantum protocol in the one-way NOF setting. By contrast, we prove that any $k$-party one-way randomized protocol for this problem requires communication $\Omega(\frac{n^{1/3}}{2^{k/3}})$. Notably, our separation applies even to a generalization of $k$-player one-way communication, where the first player speaks once, and all other $k-1$ players can communicate freely.  

\end{abstract}
\section{Introduction}
A fundamental objective in the study of quantum advantage is to exhibit computational separations across diverse models. Communication complexity has seen substantial progress toward this goal: a line of work ~\cite{buhrman1998quantum,raz1999exponential,bar2004exponential,gavinsky2007exponential,regev2011quantum,gavinsky2016entangled,girish2022quantum,gavinsky2019quantum,gavinsky2020bare,goos2024quantum} now establishes exponential separations in several two-party communication settings. Building on these foundations, Aaronson, Buhrman, and Kretschmer recently introduced \textit{quantum information supremacy}~\cite{ABK24}—a paradigm focusing on tasks where a quantum device utilizes exponentially fewer resources (qubits) than any classical counterpart (bits). Their proposed task, rooted in the classical–quantum separation of the \textit{Hidden Matching} problem, was recently realized in a landmark trapped-ion experiment~\cite{KGDGGGHMNHA25}.

Despite these successes in two-party communication complexity, quantum advantage in the multiparty setting remains poorly understood. The premier model for multiparty communication is the Number-on-the-Forehead (NOF) model, introduced by Chandra, Furst, and Lipton~\cite{chandra1983multi}. In this model, the input is partitioned into $k$ parts, and player $i$ can see all parts except their own (as if it were "written on their forehead"). While the $k = 2$ case coincides with the standard two-party model, for $k \ge 3$, NOF protocols can simulate a strictly broader range of computational models, making it a uniquely powerful and challenging framework. Recently, Göös, Gur, Jain, and Li~\cite{goos2024quantum} highlighted the separation of quantum and randomized communication in NOF models as a major open problem.

The main obstacle to obtaining strong separations between quantum and randomized protocols in the NOF model, already noted in \cite{goos2024quantum,yang2025quantum}, is the lack of techniques for proving lower bounds for deterministic and randomized NOF protocols. Most existing methods, such as the discrepancy method \cite{rao2015simplified}, apply equally well to quantum, randomized, and deterministic communication, making it difficult to distinguish the power of quantum versus classical protocols in the NOF setting~\cite{lee2009lower}. 

Therefore, it is natural to consider the quantum advantage of the NOF model in restricted settings. Similar to the research on two-party CC, the focus typically shifts from simultaneous models to one-way and then to two-way communication \cite{buhrman1998quantum,raz1999exponential,bar2004exponential,gavinsky2007exponential,regev2011quantum,gavinsky2016entangled,girish2022quantum,gavinsky2019quantum,gavinsky2020bare,goos2024quantum}. A remarkably natural and well-studied restriction of this model is the one-way NOF communication model, in which each player is allowed to speak exactly once according to a fixed order. Some progress has been made on the simultaneous NOF communication model, which is a weaker variant than the one-way NOF model. Gavinsky and Pudlák~\cite{QCCinNOF} established the first $O(\log n)$ vs $\Omega(n^{1/6k^2})$ separation between quantum and classical communication in the simultaneous NOF communication setting. Recently, Yang and Zhang~\cite{yang2025quantum} improved this to an $O(\log n)$ vs $\Omega(n^{1/16})$ separation for three-player simultaneous NOF. 
In \cite{QCCinNOF} and \cite{yang2025quantum}, they further posed the following open problem:
\begin{center}
\textit{Can exponential quantum advantage be established in one-way Numbers-on-Forehead communication?}
\end{center}
As mentioned in \cite{QCCinNOF}, even the three-player case of one-way Numbers-on-Forehead communication is highly nontrivial and of independent interest.

\paragraph{Why we study one-way NOF communication.}
Although one-way NOF model may at first appear to be a restricted variant of the general NOF model, it nonetheless possesses significant theoretical motivations and a wide range of applications.

First, the model holds pivotal importance for several long-standing open problems in circuit complexity. For instance, proving an $\omega(\log n)$ lower bound for any function $f$ in $k$-party deterministic one-way NOF communication with $k = \log n$ would imply that $f \notin \mathrm{ACC}^0$ \cite{HG90, beigel1994acc, chakrabarti2007lower, VW07}. Even more strikingly, establishing strong deterministic one-way lower bounds for only three players in specific problems would yield significant size-depth trade-offs for Boolean circuits \cite{valiant1977graph, pudlak1997boolean}. Beyond circuit complexity, one-way NOF lower bounds have deep connections with additive combinatorics. Remarkably, establishing strong deterministic one-way NOF lower bounds for specific problems yields quantitative bounds for the Hales-Jewett theorem, dense Ruzsa-Szemer'{e}di graphs, $k$-term arithmetic-progression-free sets~\cite{linial2017communication,jaber2025}.

More broadly, one-way NOF communication model has wide applications in theoretical computer science, with particularly profound implications for cryptography. In cryptography, one-way NOF communication is intrinsically linked to protocols of Private Information Retrieval (PIR)~\cite{chor1998private}, lower bounds in position-based cryptography, lower bounds in function inversion~\cite{brody2017position,corrigan2019function}, and leakage-resilient key exchange protocols~\cite{li2020leakage}.  Beyond these cryptographic primitives, it's further evidenced by its direct applications in lower bounds in distributed computing~\cite{drucker2014power}, construction of space-bounded pseudorandom generators~\cite{babai1992multiparty,ganor2014space},time-space trade-off of oblivious branching programs~\cite{VW07}, and lower bounds of streaming algorithms~\cite{kallaugher2019complexity}.

Despite its profound implications and extensive applications, several fundamental questions regarding the one-way NOF communication model remain unresolved. Most notably, as recently highlighted by \cite{QCCinNOF} and \cite{yang2025quantum}, establishing an explicit exponential separation between randomized and quantum one-way NOF communication remains a major open problem. 
Establishing strong lower bounds in one-way NOF communication remains highly challenging. In addition to the aforementioned discrepancy method, several other techniques have been developed for proving one-way NOF lower bounds ~\cite{pudlak1997boolean, beame2005direct, VW07}. However, none of these approaches can surpass the $\Omega(n^{1/(k - 1)})$ lower bound barrier.

\paragraph{Quantum advantage via lifting techniques}
Lifting theorems are a generic method for translating lower bounds from weaker computational models to relatively stronger ones. A representative example of lifting theorems is the query-to-communication lifting theorems \cite{raz1997separation, zhang2009tightness, goos2018deterministic,pitassi2017strongly,goos2020query,chattopadhyay2019query,lovett2020lifting, Collision,mao2025gadgetless}, which convert lower bounds in query complexity into communication complexity lower bounds, using a suitable base function composed with a gadget.

In the two-party setting, applying lifting to problems where quantum decision trees offer an exponential advantage over randomized ones, such as the Forrelation problem \cite{aaronson2010bqp, aaronson2015forrelation}, we can get the exponential separations between two-way quantum and classical communication complexity \cite{girish2022quantum,goos2024quantum}. Investigating whether similar lifting techniques can be extended to the more complex one-way NOF setting is a natural and highly promising direction.

Recently, Yang and Zhang \cite{yang2025deterministic} proposed a novel lifting framework connecting two-party communication with the Number-on-the-Forehead (NOF) model. Specifically, they established a deterministic lifting theorem that translates one-way two-party lower bounds into the one-way NOF setting. However, they left the randomized version as a significant open problem, achieving only deterministic bounds at present. 

Inspired by the technique in \cite{yang2025deterministic}, we successfully bypass this $\Omega(n^{1/(k - 1)})$ lower bound barrier and resolve the aforementioned open problem. Specifically, we establish the first explicit exponential separation between randomized and quantum one-way NOF communication by using lifting techniques.


\subsection{Our Contributions}

In this paper, we establish the first exponential separation between quantum and randomized communication complexity in the one-way NOF model.

Our construction is based on a lifted version of the Hidden Matching (HM) problem introduced by Bar-Yossef et al.~\cite{bar2004exponential}. In the standard two-party HM problem:\begin{enumerate}\item Alice is given a string $z \in \{0,1\}^n$.\item Bob is given a perfect matching $M \in \mathcal{M}_n$ on $n$ nodes.\end{enumerate}The goal is for Bob to output a tuple $\langle i, j, b \rangle$ such that the edge $(i, j) \in M$ and $b = z_i \oplus z_j$. 

It was shown in~\cite{bar2004exponential} that HM admits an $O(\log n)$ simultaneous quantum protocol, whereas any one-way randomized protocol requires $\Omega(\sqrt{n})$ communication.

In our construction, we utilize a structured subset of perfect matchings. Fix $m$, for each $i \in [m]$, we define a perfect matching $M_i$ between the sets $\{0, \dots, m-1\}$ and $\{m, \dots, 2m-1\}$ as:$$M_i := \left\{ \left(\ell, m + ((i + \ell) \bmod m)\right) : \ell \in [m] \right\}.$$Let $\mathcal{M} = \{M_1, \dots, M_m\}$ denote this collection. The Lifted Hidden Matching problem, denoted by $\text{HM} * g$, is defined as follows:
\begin{definition}\label{dfn: GHM}
Let \( g : \{0,1\}^{n(k-1)} \rightarrow \{0,1\}^{n_0}  \) be the gadget function with $n_0=(\frac{n}{2^k})^{2/3}$. The \textit{Lifted Hidden Matching Problem}, denoted \( \GHM * g \), involves $k$ inputs distributed among the players as follows: Let $m=n_0/2$,
\begin{itemize}
    \item The $i$-th player receives \( (x_1,...,x_{i-1},x_{i+1},...,x_k) \) where $x_1\in [m]$ and $x_i \in \{0,1\}^n$ for $2\leq i\leq k$.
\end{itemize}
The goal is for the last player to output a tuple \( \langle \ell, r, b \rangle \) such that \( (\ell, r) \in M_{x_1} \) and \( b = g(x_2,...,x_k)_\ell \oplus g(x_2,...,x_k)_{r} \).
\end{definition}

Remarkably, this problem remains easy for quantum communication. Similar to the Hidden Matching Problem, the first player only needs to send a uniform superposition of the string \( g(x_2,...,x_k) \), with a communication cost of \( O(\log n) \) qubits. The last player can then perform a measurement on this superposition, which depends on the matching \( M_{x_1} \), and output the parity of some pair in \( M_{x_1} \) (see the appendix for more details). We note that other players do not need to send any message using this protocol. Thus, the main result of our paper is the \( n^{\Omega(1)} \) randomized one-way NOF communication lower bound.

\begin{theorem}\label{main_theorem}
Let $n_0=(\frac{n}{2^k})^{2/3}$, there exists an explicit gadget function \( g : \{0,1\}^{n(k-1)} \rightarrow \{0,1\}^{n_0}\) such that the randomized one-way NOF communication complexity of \( \GHM * g \) is $\Omega(\frac{n^{1/3}}{2^{k/3}})$.
\end{theorem}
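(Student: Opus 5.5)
The plan is to choose $g$ as a generalized inner product gadget applied blockwise. Split each $x_j$ ($2\le j\le k$) into $n_0$ blocks of length $t=n/n_0$, and set
\[
g(x_2,\dots,x_k)_\ell=\gip(x_2^{(\ell)},\dots,x_k^{(\ell)})=\bigoplus_{s\in[t]}\prod_{j=2}^k x_{j,s}^{(\ell)} .
\]
Player 1 sees $x_2,\dots,x_k$ and speaks once; players $2,\dots,k$ see $x_1$ and all but one of the strings, and may then talk freely. We want to show that the message $\Pi_1$ of player 1 must be long. By Yao's principle we fix the hard distribution: $x_1$ uniform in $[m]$ and $x_2,\dots,x_k$ uniform, and we consider deterministic protocols with error at most $1/3$.

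\textbf{Step 1 (reduction to a one-way HM-type statement).} Condition on $\Pi_1=\pi$ with $|\pi|=c$. The rectangle-like set $X_\pi$ of $(x_2,\dots,x_k)$ consistent with $\pi$ has density $2^{-c}$ on average. Players $2,\dots,k$ jointly know $x_1$ but collectively, in NOF fashion, do not jointly know $(x_2,\dots,x_k)$. Their subsequent free communication is a $(k-1)$-party NOF protocol on $(x_2,\dots,x_k)$ restricted to $X_\pi$. The key claim is that on a typical large $X_\pi$, the string $z=g(x_2,\dots,x_k)$ is close to uniform in a structured sense. Concretely, for every small set $S$ of coordinates of size at most $d\approx c$, the parities $\bigoplus_{\ell\in S} z_\ell$ have bias roughly $2^{c}\cdot 2^{-\Omega(t/2^k)}$. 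This follows from the standard discrepancy bound for GIP over cylinder intersections (the BNS / Chung–Tetali bound $2^{-\Omega(t/4^k)}$, or $2^{-t/2^k}$ after optimization), applied to the XOR of blocks, which is itself a GIP on $|S|t$ bits. Moreover, this must hold jointly with the later players' communication, since their transcript partitions into cylinder intersections. This is the lifting step in the style of \cite{yang2025deterministic}. Its randomized version is exactly the piece left open there, and here the HM structure lets us avoid needing a general lifting theorem.

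\textbf{Step 2 (HM lower bound under pseudorandomness).} The last player outputs $\langle \ell,r,b\rangle$ with $(\ell,r)\in M_{x_1}$. Success requires predicting $z_\ell\oplus z_r$ well for the edge they choose. Following the Fourier argument of \cite{bar2004exponential}, the success advantage is bounded by $\mathbb{E}_{x_1}\sum_{(\ell,r)\in M_{x_1}}|\widehat{f_\pi}(\{\ell,r\})|$. Here $f_\pi$ is the density of $z$ conditioned on the transcript. The cyclic matchings $M_1,\dots,M_m$ together cover every cross pair $(\ell,m+r')$ exactly once. By KKL / level-2 inequality, $\sum_{|S|=2}\widehat{f}(S)^2\lesssim c^2$ for density $2^{-c}$, and Cauchy–Schwarz then gives advantage $\lesssim c/\sqrt{m}$. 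The gadget's discrepancy term, applied in Step 1, contributes an additive error $\lesssim n_0^2\, 2^{c}\, 2^{-t/2^k}$. Setting $t=n/n_0=(n/2^k)^{1/3}2^{k}$ cancels the $2^k$ loss, giving $2^{-t/2^k}=2^{-(n/2^k)^{1/3}}$. Requiring constant advantage then forces $c=\Omega(\min(\sqrt{m},(n/2^k)^{1/3}))=\Omega(n^{1/3}/2^{k/3})$, matching the choice $n_0=(n/2^k)^{2/3}$.

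\textbf{Main obstacle.} The hard part is Step 1 in the randomized setting. We must make the later players' free communication harmless, since it only sees $x_1$ plus NOF views and conditions $(x_2,\dots,x_k)$ on a cylinder intersection. We also need to show that the conditional distribution of $z$ restricted to each pair $\{\ell,r\}$ still has small low-degree Fourier mass.

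First I would try the following. Treat each pair-parity $z_\ell\oplus z_r$ as a GIP on $2t$ bits. Bound its correlation with the indicator of (first-message set) $\cap$ (cylinder intersection) by the cylinder discrepancy times $2^{c}$. Then union-bound over the $\le m$ edges of $M_{x_1}$, absorbing the $\log$ and $n_0^2$ factors into the $2^{-t/2^k}$ slack.
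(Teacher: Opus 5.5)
\textbf{There is a genuine gap, located at the obstacle you flag yourself, and the repair you sketch does not work.}

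\emph{What fails.} The set $X_\pi$ of inputs consistent with player 1's message is an arbitrary set of density about $2^{-c}$. It is not a cylinder intersection, because player 1 sees all of $x_2,\dots,x_k$. So no bound of the form ``the correlation of $z_\ell\oplus z_r$ with $1_{X_\pi\cap C}$ is at most $2^{c}$ times the discrepancy'' can hold. With a single bit, player 1 can announce $z_\ell\oplus z_r$ for one fixed cross pair, and that parity then has bias $1$ on $X_\pi$.
\begin{itemize}
\item For the same reason the key claim of Step 1 (every low-weight parity of $z$ is nearly unbiased on a typical $X_\pi$) is false. If it were true, it would also make Step 2 unnecessary.
\item Step 2's level-2 argument is the right tool for player 1, but it needs the conditional density $f$ of $z$ to be one function, independent of the matching. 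Only then does averaging over $x_1$ spread the level-2 mass over all $m$ matchings. (The quantity to control is $\mathbb{E}_{x_1}\max_{e\in M_{x_1}}|\hat f(e)|$; the sum over $M_{x_1}$ you wrote only gives $O(c)$.)
\item Players $2,\dots,k$ see $x_1$. Once their messages enter the conditioning, $f$ depends on $x_1$ and the averaging collapses.
\item You also cannot show that $|\Pi_1|$ alone is large. With your gadget and $k\ge 3$, player 2 can send the two blocks of $x_k$ indexed by an edge of $M_{x_1}$ ($2t$ bits), and player $k$ answers correctly with $\Pi_1$ empty. The bound must be on total cost, and your error term never involves the later players' cost.
\end{itemize}

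\emph{The missing idea} is the paper's simplified protocol. Each of players $2,\dots,k$ sends its messages for all $m$ possible values of $x_1$. This makes the whole transcript $\tau$ independent of the matching, at the price of multiplying the later players' cost by $m$.

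The burden then moves to the gadget, which must be a pointwise extractor: $\Pr[g=z]\le 2^{-n_0+1}$ for every $z$, on every cylinder intersection of density at least $2^{-n/2^k}$. The paper uses the first $n_0$ bits of $\gip^{k-1}$ over $\mathbb{F}_{2^{n/2^k}}$ with $r=2^{k+1}$.
\begin{itemize}
\item Each transcript's input set is (first-message set) $\cap$ (cylinder intersection $C_\tau$). Comparing its mass with that of $C_\tau$ and applying Jensen gives $\mathrm{I}(Z;\tau)\le|\Pi_1|+2$ whenever $m\sum_{i\ge2}|\Pi_i|=o(n/2^k)$.
\item Independence from $x_1$ lets a spanning-forest plus generalized Fano argument give $\mathrm{I}(Z;\tau)=\Omega(\sqrt{n_0})$.
\item Balancing the factor $m$ against $n/2^k$ is what forces $n_0=(n/2^k)^{2/3}$.
\end{itemize}

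Your blockwise gadget cannot be used on this route, because it is not an extractor at the required density. A cylinder intersection of density $2^{-t}$ (player 2 announcing one block of $x_k$) already fixes a coordinate of $z$. With your gadget you would need a new argument that handles matching-dependent NOF communication directly. That is essentially the randomized one-way NOF lifting theorem, which is still open.
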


\section{Preliminaries}
\subsection{Communication Complexity}We begin by recalling standard definitions in communication complexity. In the two-party communication model, Alice and Bob receive inputs $x \in X$ and $y \in Y$, respectively. Rather than computing a simple boolean function, their goal is to solve a \textit{search problem} (or compute a relation) $\mathcal{P} \subseteq X \times Y \times Z$. For any given input $(x,y)$, the objective is to output a valid answer $z \in Z$ such that $(x,y,z) \in \mathcal{P}$.

\begin{definition}[Randomized one-way communication complexity]
In the randomized one-way communication model, Alice and Bob share a public random string $r$. Alice sends a single message $\Pi(x, r)$ to Bob, and Bob outputs an answer $z \in Z$ based on $y$, the received message, and the shared randomness. A protocol has $\delta$-error if, for every input $(x, y)$, the probability over the shared randomness that Bob outputs an invalid answer is at most $\delta$. The randomized one-way communication complexity of $\mathcal{P}$ with error $\delta$ is the maximum length of Alice's message over all inputs and random strings.
\end{definition}

By Yao's Minimax Principle, to prove a lower bound for randomized protocols with error $\delta$ in the worst case, it is sufficient to prove a lower bound for the \textit{distributional complexity}—that is, the communication complexity of any deterministic protocol that errs with probability at most $\delta$ under a chosen input distribution (in our case, the uniform distribution).

\begin{definition}[One-way NOF Communication Model for Search Problems]
In the $k$-party one-way Numbers-on-Forehead (NOF) model, $k$ players collaborate to solve a search problem $\mathcal{P} \subseteq X_1 \times \cdots \times X_k \times Z$. The inputs are distributed such that each player $i$ knows all inputs except for their own $x_i$ (i.e., player $i$ sees $x_{-i}$).In the randomized one-way setting, the players communicate in a fixed order, from the first player to the last. Each player $i$ sends a single message $\Pi_i(x_{-i}, r)$ based on their visible input, the previously sent messages, and the shared randomness $r$. The last player then outputs an answer $z \in Z$.
\end{definition}

\begin{definition}[Cylinder Intersections]A set $S \subseteq X_1 \times \cdots \times X_k$ is called a cylinder if there exists an index $i \in [k]$ such that membership in $S$ does not depend on the value of $x_i$. A set $S$ is called a cylinder intersection if it can be written as $S = S_1 \cap \cdots \cap S_k$, where each $S_i$ is a cylinder. 
\end{definition}

\subsection{Basics of Information Theory}\label{sec:ic}
Our proof approach involves several standard definitions and results from information theory, which we now recall.

\begin{definition}[Entropy]
Given a random variable \(\boldsymbol{X}\), the Shannon entropy of \(\boldsymbol{X}\) is defined by 
\[
\ent(\boldsymbol{X}) := \sum_{x} \Pr(\boldsymbol{X} = x) \log \left( \frac{1}{\Pr(\boldsymbol{X} = x)} \right).
\]
For two random variables \(\boldsymbol{X}, \boldsymbol{Y}\), the \textit{conditional entropy} of \(\boldsymbol{X}\) given \(\boldsymbol{Y}\) is defined by 
\[
\ent(\boldsymbol{X} ~|~ \boldsymbol{Y}) := \mathbb{E}_{y \sim \boldsymbol{Y}} \left[ \ent(\boldsymbol{X} \mid \boldsymbol{Y} = y) \right].
\]
\end{definition}

\noindent For \(p \in [0,1]\), the binary entropy function is defined as $H_2(p) = - p \log_2 p - (1 - p) \log_2 (1 - p).$ It is well known that \(H_2(p)\) is a concave function.

\begin{lemma}[Subadditivity of Entropy]\label{subadditivity}
For a list of random variables \(\boldsymbol{X}_1, \boldsymbol{X}_2, \ldots, \boldsymbol{X}_d\), we have:
\[
\ent(\boldsymbol{X}_1, \boldsymbol{X}_2, \ldots, \boldsymbol{X}_d) \leq \sum_{i=1}^{d} \ent(\boldsymbol{X}_i).
\]
\end{lemma}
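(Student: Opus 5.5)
The statement immediately above is Lemma~\ref{subadditivity}, subadditivity of Shannon entropy. I will prove it by combining the chain rule with the fact that conditioning does not increase entropy, and then inducting on $d$.

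\textbf{Step 1: the chain rule.} For two variables, write
\[
\Pr(\boldsymbol{X}_1=x_1,\boldsymbol{X}_2=x_2)=\Pr(\boldsymbol{X}_1=x_1)\Pr(\boldsymbol{X}_2=x_2\mid \boldsymbol{X}_1=x_1).
\]
Take $\log(1/\cdot)$ of both sides and average over the joint distribution. This gives
\[
\ent(\boldsymbol{X}_1,\boldsymbol{X}_2)=\ent(\boldsymbol{X}_1)+\ent(\boldsymbol{X}_2\mid\boldsymbol{X}_1).
\]
Iterating the same factorization yields
\[
\ent(\boldsymbol{X}_1,\ldots,\boldsymbol{X}_d)=\sum_{i=1}^d \ent(\boldsymbol{X}_i\mid \boldsymbol{X}_1,\ldots,\boldsymbol{X}_{i-1}).
\]

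\textbf{Step 2: conditioning reduces entropy.} I need $\ent(\boldsymbol{X}\mid\boldsymbol{Y})\le \ent(\boldsymbol{X})$ for any pair of variables, where $\boldsymbol{Y}$ may itself be a tuple. Consider
\[
\ent(\boldsymbol{X})-\ent(\boldsymbol{X}\mid\boldsymbol{Y})=\sum_{x,y}\Pr(x,y)\log\frac{\Pr(x,y)}{\Pr(x)\Pr(y)}.
\]
This is the KL divergence between the joint distribution and the product of its marginals. To see that it is nonnegative, I apply Jensen's inequality to the concave function $\log$, summing over pairs with $\Pr(x,y)>0$:
\[
\sum_{x,y}\Pr(x,y)\log\frac{\Pr(x)\Pr(y)}{\Pr(x,y)}\le \log\sum_{x,y}\Pr(x)\Pr(y)\le \log 1=0.
\]
Taking $\boldsymbol{X}=\boldsymbol{X}_i$ and $\boldsymbol{Y}=(\boldsymbol{X}_1,\ldots,\boldsymbol{X}_{i-1})$ then bounds each term of the chain rule by $\ent(\boldsymbol{X}_i)$, which gives the lemma.

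Equivalently, one can induct on $d$: treat $(\boldsymbol{X}_1,\ldots,\boldsymbol{X}_{d-1})$ as a single variable and apply the two-variable case.

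The only genuinely non-formal step is the Jensen/Gibbs inequality in Step 2. The one point needing care there is restricting the sum to the support of the joint distribution. This restriction is what makes the inequality $\sum \Pr(x)\Pr(y)\le 1$ valid rather than an equality. Everything else is bookkeeping.
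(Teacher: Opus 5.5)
Your proof is correct: the chain rule combined with $\ent(\boldsymbol{X}\mid\boldsymbol{Y})\le\ent(\boldsymbol{X})$, proved via Gibbs' inequality, is the standard argument, and the paper gives no proof of its own, quoting the lemma as a standard fact in its preliminaries. One minor clarification of your closing remark: the restriction to pairs with $\Pr(x,y)>0$ is needed so that the ratio $\Pr(x)\Pr(y)/\Pr(x,y)$ is defined, and its only cost is that $\sum\Pr(x)\Pr(y)$ over that support is at most $1$ rather than equal to $1$, which is all your bound uses.
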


\begin{definition}[Mutual Information]
The mutual information between joint random variables $\boldsymbol{X}$ and $\boldsymbol{Y}$ is defined as
\[
\mathrm{I}(\boldsymbol{X}; \boldsymbol{Y}) = \ent(\boldsymbol{X}) - \ent(\boldsymbol{X} | \boldsymbol{Y}),
\]
\end{definition}

\begin{lemma}[Data Processing Inequality]\label{data_process}
Consider random variables \(\rv{X}, \rv{Y}, \rv{Z}\) forming a Markov chain \(\rv{X} \rightarrow \rv{Y} \rightarrow \rv{Z}\). Then, the mutual information satisfies:
\[
\mathrm{I}(X; Y) \geq \mathrm{I}(X; Z).
\]
\end{lemma}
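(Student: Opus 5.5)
We want $\mathrm{I}(\rv{X};\rv{Y}) \ge \mathrm{I}(\rv{X};\rv{Z})$ whenever $\rv{X}\rightarrow\rv{Y}\rightarrow\rv{Z}$ is a Markov chain, i.e.\ $\rv{X}$ and $\rv{Z}$ are conditionally independent given $\rv{Y}$. The plan is to expand the three-variable mutual information $\mathrm{I}(\rv{X};\rv{Y},\rv{Z})$ by the chain rule in two orders, and then use the Markov property together with nonnegativity of conditional mutual information.

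The first step is to record the chain rule for mutual information. Here $\mathrm{I}(\rv{X};\rv{Y},\rv{Z}) = \ent(\rv{X}) - \ent(\rv{X}\mid\rv{Y},\rv{Z})$ and $\mathrm{I}(\rv{X};\rv{Z}\mid\rv{Y}) = \ent(\rv{X}\mid\rv{Y}) - \ent(\rv{X}\mid\rv{Y},\rv{Z})$. Adding and subtracting $\ent(\rv{X}\mid\rv{Y})$, or respectively $\ent(\rv{X}\mid\rv{Z})$, gives
\[
\mathrm{I}(\rv{X};\rv{Y},\rv{Z}) = \mathrm{I}(\rv{X};\rv{Y}) + \mathrm{I}(\rv{X};\rv{Z}\mid\rv{Y}) = \mathrm{I}(\rv{X};\rv{Z}) + \mathrm{I}(\rv{X};\rv{Y}\mid\rv{Z}).
\]
This is pure bookkeeping from the definitions of entropy and conditional entropy in the paper.

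Next we use the Markov property. For each $y$, the conditional distribution of $(\rv{X},\rv{Z})$ given $\rv{Y}=y$ is a product distribution. Hence $\ent(\rv{X}\mid\rv{Y}=y,\rv{Z}) = \ent(\rv{X}\mid\rv{Y}=y)$, and averaging over $y$ gives $\mathrm{I}(\rv{X};\rv{Z}\mid\rv{Y})=0$. The identity above then reduces to $\mathrm{I}(\rv{X};\rv{Y}) = \mathrm{I}(\rv{X};\rv{Z}) + \mathrm{I}(\rv{X};\rv{Y}\mid\rv{Z})$.

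The remaining and only substantive step is to show $\mathrm{I}(\rv{X};\rv{Y}\mid\rv{Z})\ge 0$, which is equivalent to ``conditioning reduces entropy''. Since conditional mutual information is an average over $z$ of unconditional mutual informations, it suffices to show $\mathrm{I}(\rv{A};\rv{B})\ge0$ for any pair $\rv{A},\rv{B}$. Write
\[
\mathrm{I}(\rv{A};\rv{B}) = \sum_{a,b} p(a,b)\log\frac{p(a,b)}{p(a)p(b)}.
\]
This is nonnegative by Jensen's inequality applied to the concave function $\log$:
\[
-\mathrm{I}(\rv{A};\rv{B}) = \sum_{a,b} p(a,b)\log\frac{p(a)p(b)}{p(a,b)} \le \log\sum_{a,b:\,p(a,b)>0} p(a)p(b) \le \log 1 = 0.
\]
Equivalently, this is the statement $\ent(\rv{A},\rv{B})\le \ent(\rv{A})+\ent(\rv{B})$, i.e.\ Lemma~\ref{subadditivity} with $d=2$, so we may simply cite it.

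Combining the steps gives $\mathrm{I}(\rv{X};\rv{Y}) \ge \mathrm{I}(\rv{X};\rv{Z})$. I expect no real obstacle. The only care needed is to verify the Markov condition pointwise in $y$ and to handle zero-probability terms in the Jensen step by restricting the sum to the support.
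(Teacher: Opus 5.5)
Your proof is correct. It is the standard textbook argument: expand $\mathrm{I}(\rv{X};\rv{Y},\rv{Z})$ by the chain rule in both orders, use the Markov property to get $\mathrm{I}(\rv{X};\rv{Z}\mid\rv{Y})=0$, and then use $\mathrm{I}(\rv{X};\rv{Y}\mid\rv{Z})\ge 0$.

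The paper has no proof to compare against. It only recalls the data processing inequality as a standard information-theoretic fact in the preliminaries, so your argument simply fills that in.

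One small remark: identifying $\mathrm{I}(\rv{A};\rv{B})\ge 0$ with Lemma~\ref{subadditivity} uses the chain rule $\ent(\rv{A},\rv{B})=\ent(\rv{B})+\ent(\rv{A}\mid\rv{B})$, which the paper does not state either. Your direct Jensen computation already makes that citation unnecessary.
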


\begin{definition}[Hamming Distance]
Let $x = (x_1, \ldots, x_n), y = (y_1, \ldots, y_n )\in \{0,1\}^n$ be two strings. Their Hamming distance \(d_H(x, y)\) is defined as:
\[
d_H(x, y) := |\{i: x_i\neq y_i\}|.
\]
\end{definition}

\begin{definition}[Total Variation Distance]
For two discrete random variables $\boldsymbol{X}$ and $\boldsymbol{Y}$ taking values in a finite sample space $\Omega$, the total variation distance between their probability distributions is defined as:$$\Delta_{TV}(\boldsymbol{X}, \boldsymbol{Y}) := \frac{1}{2} \sum_{\omega \in \Omega} |\Pr(\boldsymbol{X} = \omega) - \Pr(\boldsymbol{Y} = \omega)|.$$In this paper, we use $U_m$ to denote a random variable that is uniformly distributed over the Boolean hypercube $\{0,1\}^m$. For instance, $\Delta_{TV}(\boldsymbol{Z}, U_{n_0})$ measures the statistical distance between the distribution of a random variable $\boldsymbol{Z} \in \{0,1\}^{n_0}$ and the uniform distribution over $\{0,1\}^{n_0}$.
\end{definition}

\section{Quantum Advantage for One-Way NOF Communication}

We prove the exponential separation between quantum and randomized one-way Numbers-on-Forehead (NOF) communication in this section. We first recall the statement.

\begin{theorem}[Theorem \ref{main_theorem} restated]
There exists an explicit gadget function \( g : \{0,1\}^{n(k-1)} \rightarrow \{0,1\}^{n_0}\) such that the randomized one-way NOF communication complexity of \( \GHM * g \) is $\Omega\left(\frac{n^{1/3}}{2^{k/3}}\right)$. 
\end{theorem}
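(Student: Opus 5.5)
Fix the gadget to be an explicit inner-product-type function. Split each $x_j\in\{0,1\}^n$ ($2\le j\le k$) into $n_0$ blocks of length $n/n_0$. Let the $\ell$-th output bit be the generalized inner product of the $\ell$-th blocks:
\[
g(x_2,\dots,x_k)_\ell=\gip\big(x_2^{(\ell)},\dots,x_k^{(\ell)}\big).
\]
By Yao's principle it suffices to consider a deterministic protocol on uniform inputs with constant error $\delta$.

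The protocol has a first player, holding $x_2,\dots,x_k$, who speaks once with a message $\Pi_1$ of length $c$. The remaining $k-1$ players may then talk freely. The last player must output an edge of $M_{x_1}$ together with the parity of $z=g(x_2,\dots,x_k)$ on that edge. The only information about $x_1$ available to players $2,\dots,k$ is free, but the first player's message does not depend on $x_1$. So I will view the protocol as a two-party one-way protocol:
\begin{itemize}
\item \emph{Alice} is the first player. She effectively knows $z$, in fact all of $x_2,\dots,x_k$.
\item \emph{Bob} is the coalition of players $2,\dots,k$. He knows $x_1$ and, jointly, the gadget inputs, except that no single member sees all of $x_2,\dots,x_k$.
\end{itemize}
Plain reduction to Hidden Matching fails, since Bob's coalition jointly knows $z$. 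The lifting step must use that each of players $2,\dots,k$ misses its own forehead. So after fixing $\Pi_1$, their free conversation is a $(k-1)$-party NOF protocol on $x_2,\dots,x_k$, and its transcript partitions the input space into cylinder intersections.

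\textbf{Step 1 (Alice's message).} For a typical message value $\Pi_1=\pi$, the set of consistent $(x_2,\dots,x_k)$ has density about $2^{-c}$. Subadditivity (Lemma~\ref{subadditivity}) then shows that $z$, conditioned on $\pi$, retains entropy $n_0-c$. Following Bar-Yossef–Jayram–Kerenidis, a Fourier/KKL-type argument shows that for a uniformly random matching $M_{x_1}\in\mathcal M$, the vector of edge parities $(z_\ell\oplus z_r)_{(\ell,r)\in M_{x_1}}$ is close to uniform in total variation unless $c\gtrsim\sqrt{n_0}$. I need a version for the structured family $\mathcal M$. Averaging over the cyclic shifts $i\in[m]$ covers every bipartite pair $(\ell,m+r)$ exactly once, so the level-2 Fourier mass bound carries over with the same $O(c^2/m)$ estimate. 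With $n_0=(n/2^k)^{2/3}$ this targets $c=\Omega(n_0^{1/2})=\Omega(n^{1/3}/2^{k/3})$.

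\textbf{Step 2 (lifting through the coalition's conversation).} The coalition's conversation produces a cylinder intersection. I will show that it cannot extract the parities itself: for a random edge $(\ell,r)$, the bit $z_\ell\oplus z_r$ is nearly unbiased given the coalition's transcript together with $\pi$. This is a discrepancy bound for $\gip$ over cylinder intersections, of the kind in \cite{rao2015simplified}. A $(k-1)$-party transcript of length $t$ gives advantage about $2^t\cdot 2^{-\Omega((n/n_0)/4^{k})}$ per block. Conditioning on $\pi$ only costs a $2^{c}$ density loss, which the block length $n/n_0=n^{1/3}2^{2k/3}$ must absorb. The sizes are chosen so that $2^{k}$-type losses are dominated.

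Combining the two steps through the data processing inequality (Lemma~\ref{data_process}), the last player's output has information about the correct parity bounded by roughly $c^2/n_0$ plus the discrepancy term. This forces $c=\Omega(\sqrt{n_0})$.

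\textbf{Main obstacle.} The hard part is Step 2, specifically its interaction with Step 1. The coalition has unlimited communication, so a plain discrepancy bound, which charges for transcript length, does not suffice. I expect instead to use the approach of \cite{yang2025deterministic}: exploit that each coalition member lacks its own forehead input, and apply a density-increment or restriction argument that makes the blocks look "fresh" to the coalition. The crux is keeping this uniform over the conditioning on $\pi$, and it is where the $2^{k}$ factor in the bound enters.
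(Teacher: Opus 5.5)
There is a genuine gap, and it starts with how you set up the problem. If players $2,\dots,k$ could really talk for free, your target ``$c=\Omega(\sqrt{n_0})$ regardless of the coalition'' would be false. For $k\ge 3$, player $2$ sees $x_k$ and can send it to player $k$, who then knows every input and answers correctly with $c=0$. The coalition's communication must be charged. What the paper proves is a dichotomy (Theorem~\ref{Information_complexity}): either $|\Pi^*_1|=\Omega(\sqrt{n_0})$ or $\sum_{i\ge 2}|\Pi^*_i|=\Omega(n/2^k)$.

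The idea you are missing is how to neutralize the fact that the coalition sees $x_1$. Its messages can be aimed at the edges of $M_{x_1}$; for example, it can run a NOF protocol on the two blocks of a single edge. So ``the parity of a random edge is nearly unbiased given the transcript'' is not the statement you need. The paper passes to the simplified protocol $\Pi^*$, in which each player $i\ge 2$ sends its message for all $m$ values of $x_1$. This costs a factor $m$ but makes the entire transcript independent of $x_1$. That factor is exactly what fixes $n_0=(n/2^k)^{2/3}$: balancing $\sqrt{n_0}$ against $n/(2^k m)$ gives $n^{1/3}/2^{k/3}$.

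Your combination step also lacks a mechanism. Bounding Alice's leakage and the coalition's leakage separately and then invoking data processing says nothing about what they reveal jointly. The paper works with $\mathrm{I}(Z;\boldsymbol{\tau})$ for the full transcript:
\begin{itemize}
\item For fixed $\tau_1$, the coalition's messages partition the inputs into cylinder intersections $C_\tau$, and the inputs consistent with $\tau$ are $S_{\tau_1}\cap C_\tau$.
\item Lemma~\ref{extractor} gives $\Pr[Z=z\mid\boldsymbol{\tau}=\tau]\le (q_\tau/p_\tau)\,2^{-n_0+1}$ whenever $C_\tau$ is large.
\item Jensen's inequality with $\sum_\tau q_\tau\le 2^{c_1}$ then yields $\mathrm{I}(Z;\boldsymbol{\tau})\le c_1+2$.
\item In the other direction, correctness of an $x_1$-independent transcript forces $\mathrm{I}(Z;\boldsymbol{\tau})=\Omega(\sqrt{n_0})$. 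This comes from a spanning forest of the predicted edges together with Lemma~\ref{fano}, with no Fourier analysis.
\end{itemize}

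Finally, your gadget cannot reach the stated bound even by your own accounting. Blockwise binary $\gip$ with block length $b=n/n_0=n^{1/3}2^{2k/3}$ and discrepancy $2^{-\Omega(b/4^k)}$ tolerates only coalition transcripts of length $t\ll n^{1/3}2^{-4k/3}$. That is a factor $2^k$ below the target. The paper instead uses a single $\gip^{k-1}_{q,r}$ over $\mathbb{F}_q$ with $q=2^{n/2^k}$ and $r=2^{k+1}$. Its whole $n_0$-bit output is pointwise near-uniform on every cylinder intersection of density at least $2^{-n/2^k}$.
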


\subsection{Step 1: Choosing the gadget function}
\begin{definition}
Let $q$ be a prime power and $k, r > 0$. We define the function $\gip_{q,r}^{k}: (\mathbb{F}_{q}^{r})^{k} \rightarrow \mathbb{F}_q$ by
\[
\gip_{q,r}^{k}(x_1,\dots,x_k) = \sum_{j \in [r]} \prod_{i \in [k]} x_{i,j},
\]
where $\mathbb{F}_q$ is a finite field, and all arithmetic operations are over $\mathbb{F}_q$. When $q, r, k$ are clear from context, we write $\gip(x_1,\dots,x_k)$ for simplicity.
\end{definition}

We note that $\gip$ is a cylinder intersection extractor:
\begin{lemma} \label{disperser} \cite{yang2025deterministic}
For $r\geq 2^{k+1}.$ Let $S\subseteq (\mathbb{F}_{q}^{r})^{k}$ be any cylinder intersection of size $|S|\geq q^{r\cdot k-1}$. Then for every $v\in\mathbb{F}_{q}$,  we have that
\[
\Pr_{(x_1,\dots,x_k) \in S}\left[\gip(x_1,\dots,x_k) = v\right] \leq \frac{1}{q} + q\cdot (k/q)^{4}
\]
\end{lemma}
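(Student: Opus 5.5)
The plan is a Fourier expansion over $\mathbb{F}_q$, which reduces the bias of $\gip$ on $S$ to the correlation of additive characters of $\gip$ with cylinder intersections. That correlation is then bounded by the standard repeated Cauchy–Schwarz (BNS / Chung–Tetali cube) argument.

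\textbf{Step 1 (Fourier expansion).} Let $\omega$ be a nontrivial additive character of $\mathbb{F}_q$; write $\omega(a)$ and use the identity $\mathbf{1}[z=v]=\frac1q\sum_{a\in\mathbb{F}_q}\omega(a(z-v))$. Averaging over $x\in S$ gives
\[
\Pr_{x\in S}[\gip(x)=v] = \frac1q + \frac1q\sum_{a\neq 0}\omega(-av)\,\frac{1}{|S|}\sum_{x\in S}\omega\bigl(a\,\gip(x)\bigr).
\]
Since $|S|\ge q^{rk-1}$, each inner term satisfies
\[
\Bigl|\frac1{|S|}\sum_{x\in S}\omega(a\,\gip(x))\Bigr|\le q\,\Bigl|\mathbb{E}_{x}\bigl[\mathbf{1}_S(x)\,\omega(a\,\gip(x))\bigr]\Bigr|,
\]
where $x$ is uniform on $(\mathbb{F}_q^r)^k$. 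It therefore suffices to show that, for each $a\neq0$,
\[
\bigl|\mathbb{E}_x[\mathbf{1}_S(x)\,\omega(a\,\gip(x))]\bigr|\le (k/q)^4 .
\]
Summing the $q-1$ nonzero terms then gives $\frac1q+\frac{q-1}{q}\cdot q\,(k/q)^4\le \frac1q+q(k/q)^4$, which is the claim.

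\textbf{Step 2 (cube bound).} Write $\mathbf{1}_S=\prod_{i}\phi_i$, where $\phi_i$ is a $\{0,1\}$-valued function not depending on $x_i$. We apply Cauchy–Schwarz $k-1$ times, once for each of $x_1,\dots,x_{k-1}$, each time discarding one $\phi_i$ because it is bounded by $1$ and independent of the duplicated variable. The standard outcome is
\[
\bigl|\mathbb{E}[\mathbf{1}_S\,\omega(a\gip)]\bigr|^{2^{k-1}}\le \mathbb{E}_{y^0,y^1}\Bigl|\mathbb{E}_{x_k}\prod_{\epsilon\in\{0,1\}^{k-1}}\omega\bigl(\pm a\,\gip(y^{\epsilon_1}_1,\dots,y^{\epsilon_{k-1}}_{k-1},x_k)\bigr)\Bigr|,
\]
with the signs alternating according to the parity of $\epsilon$. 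By multilinearity, the alternating sum of $\gip$ over the cube equals $\sum_{j\in[r]}x_{k,j}\prod_{i<k}(y^1_{i,j}-y^0_{i,j})$. The inner expectation over the uniform $x_k$ is therefore $1$ if every coordinate product $\prod_{i<k}(y^1_{i,j}-y^0_{i,j})$ vanishes, and $0$ otherwise.

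For a fixed coordinate $j$, the differences are independent and uniform in $\mathbb{F}_q$, so the product vanishes with probability at most $(k-1)/q$ by a union bound. By independence across coordinates the right-hand side is at most $((k-1)/q)^r$.

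\textbf{Step 3 (conclusion).} Taking $2^{k-1}$-th roots and using $r\ge 2^{k+1}$ gives
\[
\bigl|\mathbb{E}[\mathbf{1}_S\,\omega(a\gip)]\bigr|\le ((k-1)/q)^{r/2^{k-1}}\le (k/q)^{4},
\]
which is what Step 1 required. Here we assume $k\le q$; otherwise the stated bound exceeds $1$ and is trivial.

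\textbf{Main obstacle.} The only delicate step is the bookkeeping in the iterated Cauchy–Schwarz. At each stage the remaining $\phi_i$ must be independent of the variable being duplicated, and the character must multiply correctly into the alternating cube sum over $\mathbb{F}_q$ (complex conjugation corresponds to negating the argument). I would carry this out by induction on the number of duplicated variables, exactly as in the Babai–Nisan–Szegedy proof for GIP over $\mathbb{F}_2$, with the $\mathbb{F}_q$ character replacing $(-1)^{\cdot}$.
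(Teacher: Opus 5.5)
The paper gives no proof of this lemma; it is cited from \cite{yang2025deterministic}, so there is no in-paper argument to compare yours against. Your proof is correct and self-contained. It is the standard discrepancy argument for generalized inner product over $\mathbb{F}_q$, which is presumably what the citation rests on. The individual steps all check out:
\begin{itemize}
\item The expansion of $\mathbf{1}[\gip=v]$ in additive characters is valid.
\item The loss of a factor $q$ from the density bound $|S|/q^{rk}\ge 1/q$ is correct.
\item The $k-1$ rounds of Cauchy--Schwarz reduce each nontrivial character sum to $\Pr_{y^0,y^1}\bigl[\forall j:\ \prod_{i<k}(y^1_{i,j}-y^0_{i,j})=0\bigr]\le((k-1)/q)^r$.
\item The hypothesis $r\ge 2^{k+1}$ is used exactly to make the exponent $r/2^{k-1}$ at least $4$.
\end{itemize}
Compared with the bare citation, your route gives an explicit and slightly sharper bound, $\frac1q+(q-1)\bigl(\frac{k-1}{q}\bigr)^{r/2^{k-1}}$.

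One point should be stated explicitly. The paper's definition of a cylinder intersection does not index each $S_i$ by the coordinate it ignores. You therefore need to regroup the cylinders by direction to get the factorization $\mathbf{1}_S=\prod_i\phi_i$ with $\phi_i$ independent of $x_i$. Take $\phi_i\equiv 1$ when no cylinder ignores $x_i$, and take a product of indicators when several do. Your iterated Cauchy--Schwarz relies on this factorization.
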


We modify the function $\gip_{q,r}^{k-1}$ by setting the field size $q=2^{n/2^k}$ and the number of blocks $r=2^{k+1}$. We define our gadget function $g: \{0,1\}^{n(k-1)} \rightarrow \{0,1\}^{n_0}$ such that its output is represented by the first $n_0$ bits of $\gip_{q,r}^{k-1}$.

\begin{lemma} \label{extractor}
Let $S \subseteq \{0,1\}^{n(k-1)}$ be any cylinder intersection of size $|S| \geq 2^{n(k-1)- n/2^{k}}$. Then for any $z \in \{0,1\}^{n_0}$,
$$ \Pr_{(x_2,\dots,x_k) \in S}\left[g(x_2,\dots,x_k) = z\right] \leq 2^{-n_0} + 2^{-2n_0}. $$
\end{lemma}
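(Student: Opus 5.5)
Apply Lemma~\ref{disperser} to $\gip_{q,r}^{k-1}$ and then push the bound through the projection onto the first $n_0$ bits. The parameters are as follows. Each $x_i\in\{0,1\}^n$ is identified with an element of $\mathbb{F}_q^{r}$, where $q=2^{n/2^k}$ and $r=2^{k+1}$; this works because $r\log q = 2^{k+1}\cdot n/2^k = 2n$. If the encoding uses only $n/2$ coordinates' worth of bits, the extra blocks can be fixed to zero. The cleanest choice is instead to reparametrize so that $r\log q=n$ exactly, taking $\log q=n/2^{k+1}$. I will carry the argument with $\log q = n/2^k$ and treat the bijection $\{0,1\}^{n}\cong\mathbb{F}_q^{r}$ as given, adjusting constants if needed. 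Under this identification, a cylinder intersection in $\{0,1\}^{n(k-1)}$ with respect to the $k-1$ coordinates $x_2,\dots,x_k$ is exactly a cylinder intersection in $(\mathbb{F}_q^r)^{k-1}$. The size hypothesis $|S|\ge 2^{n(k-1)-n/2^k}$ reads $|S|\ge q^{r(k-1)-1}$, which is precisely the hypothesis of Lemma~\ref{disperser} with $k-1$ parties. Moreover $r=2^{k+1}\ge 2^{(k-1)+1}$.

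Lemma~\ref{disperser} then gives $\Pr_{S}[\gip=v]\le 1/q+q(k/q)^4$ for every $v\in\mathbb{F}_q$. For a fixed $z\in\{0,1\}^{n_0}$, the event $g=z$ is the union over the $2^{\log q-n_0}=q/2^{n_0}$ field elements $v$ whose first $n_0$ bits equal $z$. Summing over them,
\[
\Pr_S[g=z]\le \frac{q}{2^{n_0}}\left(\frac1q+\frac{k^4}{q^3}\right)=2^{-n_0}+\frac{k^4}{2^{n_0}q^2}.
\]

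It remains to show $k^4/q^2\le 2^{-n_0}$, that is, $2\log q\ge n_0+4\log k$. With $\log q=n/2^k$ and $n_0=(n/2^k)^{2/3}\le n/2^k$, this holds as soon as $n/2^k\ge 4\log k$. This is the nontrivial regime, since otherwise the claimed lower bound $\Omega(n^{1/3}/2^{k/3})$ is trivial. We also need $n_0\le\log q$ for the truncation to make sense, and this holds by the same comparison.

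The main obstacle is bookkeeping rather than new ideas. First, the encoding of $\{0,1\}^n$ into $\mathbb{F}_q^r$ must be a bijection that respects the cylinder structure; it does, because each party's coordinate is encoded independently. Second, the parameter constraint $r\ge 2^{(k-1)+1}$ and the error term $q(k/q)^4$ must be verified carefully. If the length accounting ($r\log q$ versus $n$) is off by a factor of 2, I would fix it by setting $\log q=n/2^{k+1}$. That changes the size threshold only by a constant factor in the exponent, and the same summation argument still yields $2^{-n_0}+2^{-2n_0}$, since $n_0\ll\log q$ still holds.
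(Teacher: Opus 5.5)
Your argument is the paper's own route. You take a pointwise bound on the full $\gip$ output from Lemma~\ref{disperser}, sum it over the $2^{N-n_0}$ field elements extending the prefix $z$, and absorb the error term via $n_0\le N$ and $q\ge k^4$ (the paper's appeal to $k\le\frac{\log n}{2}$). Your explicit bound $2^{-n_0}+k^4/(2^{n_0}q^2)$ is, if anything, more careful than the paper's one-line citation. You are also right that $r\log q=2n\neq n$ for the stated parameters, a mismatch the paper never addresses.

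Your preferred repair, however, does not prove the stated lemma. With $\log q=n/2^{k+1}$ and $r=2^{k+1}$, Lemma~\ref{disperser} requires $|S|\ge q^{r(k-1)-1}=2^{n(k-1)-n/2^{k+1}}$, i.e.\ density at least $1/q$. The hypothesis only gives density $2^{-n/2^k}=q^{-2}$. So ``changing the threshold by a constant factor in the exponent'' means proving a different lemma about a different $g$.

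Zero-padding into $(\mathbb{F}_q^{2^{k+1}})^{k-1}$ and applying Lemma~\ref{disperser} in that larger space also fails its size condition. Relatedly, your translation of the hypothesis into $|S|\ge q^{r(k-1)-1}$ is exact only when $r\log q=n$.

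The clean fix is to keep $q=2^{n/2^k}$ and take $r=2^k$ blocks; equivalently, observe that the zero-padded function is $\gip^{k-1}_{q,2^k}$. Then:
\begin{enumerate}
\item $r\log q=n$, so you get a genuine coordinatewise bijection that preserves cylinder intersections;
\item $r=2^{(k-1)+1}$ is exactly what Lemma~\ref{disperser} needs for $k-1$ parties;
\item the size hypothesis becomes exactly $|S|\ge q^{r(k-1)-1}$.
\end{enumerate}
With these parameters, the rest of your computation goes through unchanged.
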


\begin{proof}
Let $N = n/2^k$, so the field size is $q = 2^N$. We identify $\mathbb{F}_q$ with the Boolean hypercube $\{0,1\}^N$. By the discrepancy bound for the $\gip$ function over the cylinder intersection $S$, the distribution of the full output of $\gip$ is extremely close to uniform. Specifically, for any $y \in \{0,1\}^N$, the point-wise probability is bounded by:
$$ \Pr_{(x_2,\dots,x_k) \in S}[\gip(x_2,\dots,x_k) = y] \leq 2^{-N} + 2^{-2N}, $$
based on the exponential sum bound for our choice of $r = 2^{k+1}$ and $k \leq \frac{\log n}{2}$.

Our gadget function $Z = g(x_2, \dots, x_k)$ outputs exactly the first $n_0$ bits of $\gip$. For any specific prefix $z \in \{0,1\}^{n_0}$, there are exactly $2^{N - n_0}$ full outputs $y \in \{0,1\}^N$ that match this prefix. Therefore, the probability of obtaining $z$ over the cylinder intersection $S$ is the sum of the probabilities of these matching extensions:
\begin{align*}
\Pr_{(x_2,\dots,x_k) \in S}[g(x_2,\dots,x_k) = z] &= \sum_{y \in \{0,1\}^N: y \text{ matches } z} \Pr_{(x_2,\dots,x_k) \in S}[\gip(x_2,\dots,x_k) = y] \\
&\leq 2^{N - n_0} \left( 2^{-N} + 2^{-2N} \right) \\
&= 2^{-n_0} + 2^{-N - n_0} \leq 2^{-n_0} + 2^{-2n_0}.
\end{align*}
Where the last inequality follows by the fact that $n_0 = (n/2^k)^{2/3} \leq n/2^k = N$.
\end{proof}

\subsection{Step 2: Simplifying one-way NOF protocols}\label{symmetric}
To prove the randomized communication lower bound, by Yao’s minimax principle, it is sufficient to prove a lower bound for any protocol that achieves a constant advantage under a uniform input distribution.

To simplify our analysis, we first apply the following simplified process to any one-way NOF protocol. Let $m = n_0/2$ denote the number of possible matchings.

\begin{definition}[Simplified protocols]
For any protocol $\Pi$ for $\GHM * g$ under the uniform distribution, we define its simplified version $\Pi^*$ as follows:
\begin{itemize}
    \item The first player sends $\Pi^{*}_1(x_{-1}) = \Pi_1(x_{-1})$.
    \item For each $2\leq i \leq k$, the $i$-th player sends $\Pi^{*}_i(x_{-i}) = (\Pi_i(j,x_2,...,x_{i-1},x_{i+1},...,x_k))_{j\in [m]}$.
\end{itemize}
\end{definition}

The high-level intuition behind the simplified protocols is that each player $i$ with $2\leq i \leq k$ enumerates its transcript for all possible values of $x_1 \in [m]$ and sends the entire list. We observe the following useful property of simplified protocols.

\begin{claim}\label{symmetric_property}
Let $\Pi$ be a deterministic protocol for $\GHM * g$ with $\delta$-error under the uniform distribution. Then the following statements hold:
\begin{enumerate}
    \item\label{1} The protocol $\Pi^*$ is a deterministic protocol for $\GHM * g$ with $\delta$-error under the uniform distribution.
    \item\label{2} $|\Pi^{*}_1| = |\Pi_1|$ and $|\Pi^{*}_i| = m \cdot |\Pi_i|$ for $2\leq i \leq k$.  The communication complexity of $\Pi^{*}$ is $|\Pi_1| + m \cdot \sum_{i=2}^k |\Pi_i|$.
    \item\label{3} Under the uniform distribution, the protocol $\Pi^*$ depend only on $(x_2,...,x_k)$. That is, they are independent of $x_1$, respectively.
\end{enumerate}
\end{claim}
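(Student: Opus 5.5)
The plan is to verify the three items of Claim~\ref{symmetric_property} separately. Each follows from the definition of $\Pi^*$ and the NOF visibility structure. The key observation is that for $2\le i\le k$, player $i$ sees $x_1$, but $\Pi^*_i$ is built by enumerating over all $j\in[m]$, so the message no longer depends on the actual value of $x_1$.

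For item~\ref{1}, we describe how the last player decodes. For each $2\le i\le k$, the list $\Pi^*_i(x_{-i})$ contains the entry indexed by $j=x_1$, namely $\Pi_i(x_1,x_2,\dots,x_{i-1},x_{i+1},\dots,x_k)=\Pi_i(x_{-i})$. The last player sees $x_1$, so it can read off exactly the messages $\Pi_i(x_{-i})$ that the original protocol would have produced. There is one point to check: in $\Pi$, player $i$'s message may depend on the earlier messages $\Pi_1,\dots,\Pi_{i-1}$. The entry for index $j$ is then computed using the earlier entries for the same index $j$ (player $i$ sees $x_2,\dots,x_{i-1}$ and all earlier lists). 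By induction on $i$, the $j$-th coordinates of all lists reproduce the transcript of $\Pi$ on input $(j,x_2,\dots,x_k)$. The last player then applies $\Pi$'s output function to the coordinate $j=x_1$. Its output therefore coincides with $\Pi$'s output on every input, so $\Pi^*$ is deterministic and errs on exactly the same inputs. In particular its error under the uniform distribution is $\delta$.

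Item~\ref{2} is a count. $\Pi^*_1$ equals $\Pi_1$. Each $\Pi^*_i$ with $i\ge 2$ is a concatenation of $m$ messages of length $|\Pi_i|$, so $|\Pi^*_i|=m\,|\Pi_i|$. Summing over all players gives the stated total.

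For item~\ref{3}, note that player $1$ sees only $(x_2,\dots,x_k)$, so $\Pi^*_1$ is a function of $(x_2,\dots,x_k)$. For $i\ge 2$, $\Pi^*_i$ is a function of $x_2,\dots,x_{i-1},x_{i+1},\dots,x_k$ and the earlier messages, since $j$ ranges over all of $[m]$ and the true $x_1$ is never used. By induction on $i$, the full transcript of $\Pi^*$ is a deterministic function of $(x_2,\dots,x_k)$ alone. Under the uniform distribution $x_1$ is independent of $(x_2,\dots,x_k)$, so the transcript is independent of $x_1$.

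The only delicate step is the inductive consistency in item~\ref{1}: each list entry must be generated from the matching entries of the earlier lists, not from messages computed using the true $x_1$. Defining $\Pi^*_i$ with this index-wise recursion resolves it.
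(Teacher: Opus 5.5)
Your proof is correct and follows the same route as the paper. For item 1 you read off the $x_1$-th coordinate of each list to simulate $\Pi$ exactly, for item 2 you count lengths, and for item 3 you note that enumerating over the dummy index $j\in[m]$ never uses the true $x_1$. Your index-wise recursion, which computes the $j$-th entry of player $i$'s list from the $j$-th entries of the earlier lists, is a welcome clarification: the paper's definition of $\Pi^*_i$ writes $\Pi_i$ as a function of the inputs alone and leaves the dependence on earlier messages implicit.
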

\begin{proof}
The claim follows directly from the construction of $\Pi^*$. For item \ref{1}, the last player knows the actual value of $x_1$ (due to the NOF model) and can simply extract the $x_1$-th component from the tuple sent by each player $i \geq 2$. This perfectly simulates the original protocol $\Pi$, thus achieving the exact same $\delta$-error. Item \ref{2} is immediate since each player $i \geq 2$ sends exactly $m$ simulated transcripts of $\Pi_i$. Item \ref{3} holds because Player 1 does not see $x_1$ in the NOF model, and for $i \geq 2$, enumerating the messages over all possible dummy values $j \in [m]$ completely removes the dependence on the actual input $x_1$.
\end{proof}

We emphasize that item \ref{3} is crucial and significantly simplifies our subsequent analysis. By ensuring that the entire communication transcript of $\Pi^*$ is strictly independent of $x_1$, we can evaluate how much information the protocol reveals about the gadget output $g(x_2, \dots, x_k)$ without needing to condition on the specific matching $M_{x_1}$.

\subsection{Step 3: Proving communication lower bounds for simplified protocols}

To analyze the communication complexity of simplified protocols, we begin with a key observation: since the protocol $\Pi^*$ is independent of $x_1$, it inherently lacks information regarding the matching $M_{x_1}$. Consequently, to solve the Hidden Matching problem, $\Pi^*$ must contain sufficient information about $g(x_2, \dots, x_k)$. The intuition of our analysis is to consider two exhaustive cases based on how the information about $g$ is distributed among the players:

\begin{itemize}

\item \textbf{Case 1:} The first player's message $\Pi^{*}_1(x_2,\ldots,x_k)$ contains enough information about $g(x_2,\ldots,x_k)$ such that the last player can compute $\GHM*g$. In this case, we must have $|\Pi^{*}_1| = \Omega(\sqrt{n_0})$, following the known randomized one-way communication complexity of the \textit{Hidden Matching} problem \cite{bar2004exponential}.
\item \textbf{Case 2:} The remaining $k-1$ players collectively communicate sufficient information about $g(x_2,\ldots,x_k)$ through NOF communication. In this case, the total communication must satisfy $\sum_{i=2}^k |\Pi_i^{*}| = \Omega(n/2^k)$. This follows from the \textit{cylinder intersection extractor} properties of $g$, which dictate that the distribution of $g(x_2, \dots, x_k)$ remains statistically close to uniform unless the communication is $\Omega(n/2^k)$.
\end{itemize}

This intuition is formalized in the following theorem:

\begin{theorem}\label{Information_complexity}
Let $g:\{0,1\}^{n(k-1)} \rightarrow \{0,1\}^{n_0}$ be the gadget function described in Lemma \ref{extractor}. For any simplified protocol $\Pi^*$ for $\GHM * g$ under the uniform distribution with error $1/16$, we have
$$|\Pi^{*}_1| = \Omega(\sqrt{n_0}) \quad \text{or} \quad \sum_{i=2}^k |\Pi_i^{*}| = \Omega(n/2^k).$$
\end{theorem}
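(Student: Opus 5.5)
We argue by contrapositive. Suppose $\sum_{i=2}^k |\Pi_i^{*}| \le c\, n/2^k$ for a small constant $c$. We will show that the first player's message must then carry enough information about $Z := g(\rv{x}_2,\dots,\rv{x}_k)$ to solve Hidden Matching, which forces $|\Pi^*_1| = \Omega(\sqrt{n_0})$.

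\textbf{Step 1: the later messages barely bias $Z$.} Write $T = (\Pi^*_2,\dots,\Pi^*_k)$ for the concatenated messages of players $2,\dots,k$. Each $\Pi_i^*$ depends only on $x_{-i}$ restricted to coordinates $2,\dots,k$, by Claim~\ref{symmetric_property}, item~\ref{3}. Hence for every value $t$, the set $\{(x_2,\dots,x_k) : T = t\}$ is a cylinder intersection in $\{0,1\}^{n(k-1)}$.

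A transcript $t$ whose fiber has size less than $2^{n(k-1)-n/2^k}$ we call \emph{small}. There are at most $2^{c n/2^k}$ transcripts, so the small fibers have total probability at most $2^{cn/2^k - n/2^k}$, which is negligible. For every other $t$, Lemma~\ref{extractor} gives $\Pr[Z=z \mid T=t] \le 2^{-n_0}(1+2^{-n_0})$ for all $z$. Consequently $\Delta_{TV}(Z\mid_{T=t}, U_{n_0}) \le 2^{-n_0}$ and $\ent(Z \mid T=t) \ge n_0 - o(1)$. In words, conditioned on a typical $T$, the string $Z$ is essentially uniform.

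\textbf{Step 2: reduce to a one-way protocol for Hidden Matching.} Now view $\Pi^*$ as a two-party one-way protocol. Bob holds $x_1$, $T$ and $x_2,\dots,x_k$, but the correctness condition involves $x_2,\dots,x_k$ only through $Z$ and $M_{x_1}$.

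To handle the extra information Bob holds, we use public randomness to obtain a protocol for HM on $\mathcal{M}$ with uniform input. Alice receives $z$. Publicly sample $t$ from the distribution of $T$. Alice privately samples $(x_2,\dots,x_k)$ uniformly from the fiber $\{T=t,\ Z=z\}$ and sends $\Pi_1^*$. Bob needs the remaining $x$'s, and this is the main obstacle: Bob cannot see Alice's private sample.

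What I would try first is an information-theoretic version that avoids explicit simulation. Correctness of $\Pi^*$ with error $1/16$, over uniform $x_1$ that is independent of everything else, means the following. For most $t$, the last player, given $(\Pi_1^*, t)$ and also $x_{-k}$, outputs correct parities for a random matching from $\mathcal{M}$. In the standard BJK argument, the only property of Bob's side information that is used is that the answer bits $z_\ell\oplus z_r$ are predicted well. So we instead redo the BJK lower bound directly. Output correctness forces $\sum_{i\in[m]} \mathrm{I}(Z_\ell \oplus Z_r \text{ for } (\ell,r)\in M_i ; \Pi_1^*, T, x_{-k}) $ to be large.

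Here is the catch. The extra variables $x_{-k}$ themselves essentially determine $Z$ jointly with $x_k$, and $x_k$ is only revealed through $T$. So I would condition on $(T, x_2,\dots,x_{k-1})$ and apply Step 1 to the cylinder intersection in $x_k$ alone: a fiber in the single coordinate $x_k$, whose size is controlled via the same counting argument. This keeps $Z$ nearly uniform given all of the last player's view except $\Pi_1^*$.

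\textbf{Step 3: the BJK counting bound.} Given this near-uniformity, the last player's success probability $15/16$ over a random matching in $\mathcal{M}$ (the $m$ edge-disjoint matchings partition the edges of the bipartite graph) implies, via Fano and subadditivity (Lemma~\ref{subadditivity}), the following: $\Pi_1^*$ has $\Omega(1)$ bits of information about the parities along $\Omega(m)$ edge-disjoint matchings. The BJK/Fourier argument then shows that a message of length $C$ can yield constant advantage on a random matching only if $C=\Omega(\sqrt{n_0})$. Concretely, one uses the KKL-type level-2 inequality: the advantage is bounded by $\sum_{|S|=2}\hat f(S)^2 \lesssim (C/n_0)^2\cdot$ stuff, summed over the $m$ matchings. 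By the data processing inequality (Lemma~\ref{data_process}), the TV slack of $2^{-n_0}$ from Step 1 perturbs this only negligibly. This yields $|\Pi_1^*|=\Omega(\sqrt{n_0})$ and completes the dichotomy.
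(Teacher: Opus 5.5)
\textbf{There is a genuine gap, and it sits at the foundation of your decomposition.}

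In Step~1 you assert that every fiber $\{T=t\}$ is a cylinder intersection, citing item~3 of Claim~\ref{symmetric_property}. That item only removes the dependence on $x_1$. In a one-way protocol, players $2,\dots,k$ also see $\Pi_1^*$, which is an arbitrary function of all of $(x_2,\dots,x_k)$. So the fibers of $T$ alone can be arbitrary sets.

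Concretely, player~2 may simply forward $\Pi_1^*$. This costs $m c_1 = o(n/2^k)$ bits when $c_1=o(\sqrt{n_0})$, since $m\sqrt{n_0}=n/2^{k+1}$. Then $T$ reveals about $Z$ everything $\Pi_1^*$ does. Hence your claim that $Z$ is nearly uniform given everything the last player sees except $\Pi_1^*$ is false. You cannot attribute all the information to $\Pi_1^*$ and run a two-party HM bound against side information that is independent of $Z$.

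The missing idea is to condition on $\tau_1$ first. For fixed $\tau_1$, the later messages (computed as if $\tau_1$ had been sent) partition $\mathcal{X}$ into at most $2^{c_0}$ cylinder intersections $C_\tau$, and the true fiber is $S_{\tau_1}\cap C_\tau$. Lemma~\ref{extractor} applied to $C_\tau$ gives $\Pr[Z=z\mid\boldsymbol{\tau}=\tau]\le (q_\tau/p_\tau)\,2^{-n_0+1}$. Since $\sum_\tau q_\tau\le 2^{c_1}$, Jensen's inequality yields $\mathrm{I}(Z;\boldsymbol{\tau})\le c_1+O(1)$ for the whole transcript. This is Lemma~\ref{info_upper_bound}.

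\textbf{Your Step~2 fix for the last player's side information also does not work.} Once $x_2,\dots,x_{k-1}$ are fixed, the fiber is a set of $x_k$-values of size at most $2^n$. That is far below the size threshold $2^{n(k-1)-n/2^k}$ of Lemma~\ref{extractor}. Moreover, $g$ is then $\mathbb{F}_2$-linear in $x_k$. So there is no extractor statement for the ``same counting argument'' to invoke.

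The paper never conditions on $x_2,\dots,x_{k-1}$. The last player's enumerated message $\Pi_k^*$ is itself part of $\boldsymbol{\tau}$, and it is a cylinder in the $x_k$-direction. Hence for fixed $\tau$ and $M$ the answer is a constant triple. Lemma~\ref{entropy_loss} then lower-bounds $\mathrm{I}(Z;\boldsymbol{\tau})$ by $\Omega(\sqrt{n_0})$, using a spanning forest of $\Omega(\sqrt{n_0})$ predicted edges together with the generalized Fano inequality.

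Once the conditioning is repaired, your level-2 (KKL) argument in Step~3 could replace that forest/Fano step. It would have to be run per transcript, using the pointwise density bound $(q_\tau/p_\tau)\,2^{-n_0+1}$ and averaging $\log(q_\tau/p_\tau)$, whose expectation is at most $c_1$. A TV estimate combined with the data processing inequality does not supply this.
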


We observe that our main result (Theorem~\ref{main_theorem}) follows as a direct consequence of Theorem~\ref{Information_complexity}. Setting the parameters $n_0 = \left(\frac{n}{2^k}\right)^{2/3}$ and $m = n_0/2$, by Theorem~\ref{Information_complexity}  and Claim \ref{symmetric_property}, for any  protocol $\Pi$ for $\GHM * g$ under the uniform distribution with error $1/16$:
$$|\Pi| = \Omega\left(|\Pi_1^*| + \frac{1}{m} \sum_{i=2}^k |\Pi_i^{*}| \right) = \Omega\left(\frac{n^{1/3}}{2^{k/3}}\right).$$
This completes the proof of Theorem~\ref{main_theorem}.

We now dedicate the remainder of this section to establishing Theorem \ref{Information_complexity}. Let $\Pi^{*}$ be any simplified protocol that solves $\GHM * g$ with an overall expected error at most $1/16$. Let $\boldsymbol{\tau}$ denote the random variable representing the full transcript of the protocol $\Pi^*$. Let the inputs $(\rv{x_2}, \dots, \rv{x_k})$ be random variables uniformly distributed over $\{0,1\}^{(k-1)n}$, and let $Z = g(\rv{x_2}, \dots, \rv{x_k}) \in \{0,1\}^{n_0}$ be the induced hidden target random variable.

\begin{proof}[Proof of Theorem \ref{Information_complexity}]
We assume the communication cost of players 2 through $k$ is bounded by $c_0 = \sum_{i=2}^k |\Pi_i^{*}| = o(n/2^k)$. Let $c_1 = |\Pi^{*}_1|$ denote the communication cost of the first player. We aim to prove that $c_1 = \Omega(\sqrt{n_0})$.

The proof proceeds in two main parts. In the first part, we leverage the condition $c_0 = o(n/2^k)$ alongside the properties of our cylinder intersection extractor to establish an information upper bound of $\mathrm{I}(Z; \boldsymbol{\tau}) \leq c_1 + 2$, which we formalize as Lemma \ref{info_upper_bound}. In the second part, we utilize the structural properties of the one-way Hidden Matching problem and simplified protocols to prove a matching lower bound of $\mathrm{I}(Z; \boldsymbol{\tau}) = \Omega(\sqrt{n_0})$, which we formalize as Lemma \ref{entropy_loss}. 

By combining Lemma \ref{info_upper_bound} and Lemma \ref{entropy_loss}, we immediately conclude that:
$$ c_1 + 2 \geq \mathrm{I}(Z ; \boldsymbol{\tau}) \geq \Omega(\sqrt{n_0}), $$
which directly implies $|\Pi_1^*| = c_1 = \Omega(\sqrt{n_0})$. This completes the proof of Theorem \ref{Information_complexity}.
\end{proof}


We first prove the information upper bound.

\begin{lemma}\label{info_upper_bound}
Assuming the communication cost of players 2 through $k$ in $\Pi^*$ is $c_0 = o(n/2^k)$, then $\mathrm{I}(Z ; \boldsymbol{\tau}) \leq c_1 + 2$, where $c_1 = |\Pi^{*}_1|$.
\end{lemma}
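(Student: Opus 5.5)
We split the transcript as $\boldsymbol{\tau} = (\boldsymbol{\tau}_1, \boldsymbol{\tau}_{\geq 2})$. Here $\boldsymbol{\tau}_1 = \Pi_1^*(\rv{x_2},\dots,\rv{x_k})$ is the first message, of length $c_1$, and $\boldsymbol{\tau}_{\geq 2}$ collects the messages of players $2,\dots,k$, of total length $c_0$. By the chain rule,
\[
\mathrm{I}(Z;\boldsymbol{\tau}) = \mathrm{I}(Z;\boldsymbol{\tau}_{\geq 2}) + \mathrm{I}(Z;\boldsymbol{\tau}_1 \mid \boldsymbol{\tau}_{\geq 2}) \le \mathrm{I}(Z;\boldsymbol{\tau}_{\geq 2}) + c_1 ,
\]
since a $c_1$-bit message carries at most $c_1$ bits of conditional information. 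It therefore suffices to show $\mathrm{I}(Z;\boldsymbol{\tau}_{\geq 2}) \le 2$. Since $\mathrm{I}(Z;\boldsymbol{\tau}_{\geq 2}) = \ent(Z) - \ent(Z\mid\boldsymbol{\tau}_{\geq 2})$, this amounts to showing two things: $Z$ has nearly full entropy $n_0$, and conditioned on a typical transcript of players $2,\dots,k$ it still has entropy at least about $n_0-2$.

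\textbf{Step 1: the transcript partitions the input into cylinder intersections.} By item \ref{3} of Claim \ref{symmetric_property}, the messages of players $2,\dots,k$ depend only on $(x_2,\dots,x_k)$. Player $i$'s message does not depend on $x_i$, though it may depend on earlier messages, which are themselves functions of the inputs. The standard argument then applies: for each fixed value $t$ of $\boldsymbol{\tau}_{\geq 2}$, the set $S_t=\{(x_2,\dots,x_k): \boldsymbol{\tau}_{\geq 2}=t\}$ is a cylinder intersection in $\{0,1\}^{n(k-1)}$. It is an intersection of $k-1$ sets, each of which does not depend on one coordinate. Player 1's message may also be conditioned into this structure, but I would avoid doing so and handle it purely through the chain rule above.

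\textbf{Step 2: most of the mass lies in large cylinder intersections.} There are at most $2^{c_0}$ transcripts. Hence the total probability of those $t$ with $|S_t| < 2^{n(k-1) - n/2^k}$ is at most $2^{c_0}\cdot 2^{-n/2^k} = 2^{-\Omega(n/2^k)}$, using $c_0=o(n/2^k)$. Call such $t$ bad and the rest good.

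\textbf{Step 3: entropy on good transcripts.} For a good $t$, Lemma \ref{extractor} gives $\Pr[Z=z\mid t] \le 2^{-n_0}(1+2^{-n_0})$ for every $z$. This min-entropy bound yields
\[
\ent(Z\mid \boldsymbol{\tau}_{\geq 2}=t) \ge n_0 - \log(1+2^{-n_0}) \ge n_0 - 1 .
\]
Averaging over $t$ and discarding the bad transcripts, which have probability $\varepsilon=2^{-\Omega(n/2^k)}$, gives $\ent(Z\mid\boldsymbol{\tau}_{\geq 2}) \ge (1-\varepsilon)(n_0-1) \ge n_0 - 1 - \varepsilon n_0$. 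Since $n_0 = (n/2^k)^{2/3}$ is much smaller than $2^{\Omega(n/2^k)}$, the term $\varepsilon n_0$ is $o(1)$. Finally $\ent(Z)\le n_0$, so $\mathrm{I}(Z;\boldsymbol{\tau}_{\geq 2}) \le 1 + o(1) \le 2$, and the lemma follows.

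The main point needing care is Step 1. One must verify that in the simplified protocol, whose messages enumerate over all $j\in[m]$, each player's message is still a function of the inputs other than its own $x_i$ together with prior messages. Only then does the level set form a genuine cylinder intersection on $(x_2,\dots,x_k)$, which is exactly the setting of Lemma \ref{extractor}. The remaining difficulty is the bookkeeping with constants in the entropy loss, which the slack of ``$+2$'' comfortably absorbs.
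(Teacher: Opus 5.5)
Your Step 1 is false, and so is the intermediate claim $\mathrm{I}(Z;\boldsymbol{\tau}_{\geq 2})\le 2$ that the chain-rule reduction rests on.

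\textbf{Why Step 1 fails.} In the one-way model, players $2,\dots,k$ see player 1's message. Now $\boldsymbol{\tau}_1=\Pi^*_1(x_2,\dots,x_k)$ depends on \emph{all} of $x_2,\dots,x_k$, including player $i$'s own $x_i$. The standard argument gives a cylinder intersection only after $\tau_1$ is fixed as a constant string. With $\tau_1$ plugged in, the constraints of players $2,\dots,k$ cut out a cylinder intersection $C_{(\tau_1,t)}\subseteq\mathcal{X}=X_2\times\dots\times X_k$. But the constraint $\{\boldsymbol{\tau}_1=\tau_1\}=S_{\tau_1}$ is an arbitrary set. So the true level set of $\boldsymbol{\tau}_{\geq 2}$ is $\bigcup_{\tau_1}(S_{\tau_1}\cap C_{(\tau_1,t)})$, which need not be a cylinder intersection.

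\textbf{A concrete counterexample.} Let player 1 append the first $s$ bits of $Z$ to an otherwise correct message, and let player 2 in $\Pi$ forward these $s$ bits. In $\Pi^*$ this costs $c_0=ms=sn_0/2$, which is $o(n/2^k)=o(n_0^{3/2})$ whenever $s=o(\sqrt{n_0})$. Yet $\boldsymbol{\tau}_{\geq 2}$ determines $Z_1,\dots,Z_s$, so $\mathrm{I}(Z;\boldsymbol{\tau}_{\geq 2})\ge s-o(1)\gg 2$. Equivalently, its level sets have density about $2^{-s}\gg 2^{-n/2^k}$ while fixing $s$ output bits of $g$, which Lemma \ref{extractor} forbids for a cylinder intersection. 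So you cannot peel off $\boldsymbol{\tau}_1$ by the chain rule and treat the rest as a clean NOF transcript; the $c_1$ must be paid elsewhere.

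\textbf{What the paper does instead.} It keeps $\boldsymbol{\tau}_1$ inside the accounting. For a full transcript $\tau=(\tau_1,\tau_{>1})$, let $p_\tau=|S_{\tau_1}\cap C_\tau|/|\mathcal{X}|$ be its true probability and $q_\tau=|C_\tau|/|\mathcal{X}|$.
\begin{enumerate}
\item For each fixed $\tau_1$ the sets $C_\tau$ partition $\mathcal{X}$, so $\sum_\tau q_\tau\le 2^{c_1}$.
\item Transcripts with $q_\tau\le 2^{-n/2^k}$ have total probability at most $2^{c_1+c_0-n/2^k}$.
\item For the remaining $\tau$, apply Lemma \ref{extractor} to $C_\tau$, not to the true level set. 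This gives $\Pr[Z=z\mid\boldsymbol{\tau}=\tau]\le (q_\tau/p_\tau)\,2^{-n_0+1}$, hence $\ent(Z\mid\boldsymbol{\tau}=\tau)\ge n_0-1-\log(q_\tau/p_\tau)$.
\item Jensen's inequality bounds the average loss $\sum_\tau p_\tau\log(q_\tau/p_\tau)$ by $\log\sum_\tau q_\tau\le c_1$.
\end{enumerate}
Thus $c_1$ enters as the average density loss in passing from $C_\tau$ to $S_{\tau_1}\cap C_\tau$, not as the length of a message removed by the chain rule. Your Steps 2 and 3 carry over once they are run on the sets $C_\tau$, with the extra factor $2^{c_1}$ in the bad-mass bound and this correction term added.
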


\begin{proof}[Proof of Lemma \ref{info_upper_bound}]
Let $\mathcal{X} = X_2 \times \dots \times X_k$ be the input space visible to the first player. By Claim \ref{symmetric_property}, the entire protocol $\Pi^*$ is independent of $x_1$. Thus, the transcript random variable $\boldsymbol{\tau} = (\boldsymbol{\tau}_1, \boldsymbol{\tau}_{>1})$ is solely determined by inputs uniformly drawn from $\mathcal{X}$. For any specific realization, the first player's message $\tau_1$ confines the inputs to a subset $S_{\tau_1} \subseteq \mathcal{X}$, where there are at most $2^{c_1}$ such subsets. Given $\tau_1$, the subsequent messages $\tau_{>1}$ partition $\mathcal{X}$ into at most $2^{c_0}$ cylinder intersections. Thus, any specific transcript $\tau = (\tau_1, \tau_{>1})$ is associated with a cylinder intersection $C_\tau$. The actual set of valid inputs generating $\tau$ is exactly $S_{\tau_1} \cap C_\tau$.

Let $p_\tau = \Pr[\boldsymbol{\tau} = \tau] = \frac{|S_{\tau_1} \cap C_\tau|}{|\mathcal{X}|}$ be the true probability of generating $\tau$, and let $q_\tau = \frac{|C_\tau|}{|\mathcal{X}|}$ be the volume of its corresponding cylinder intersection. Since $S_{\tau_1} \cap C_\tau \subseteq C_\tau$, we trivially have $p_\tau \leq q_\tau$. Note that for any fixed $\tau_1$, the cylinders $\{C_\tau\}_{\tau_{>1}}$ partition $\mathcal{X}$, meaning $\sum_{\tau_{>1}} q_\tau = 1$. Consequently, the sum over all possible transcript realizations is tightly bounded:
$$ \sum_\tau q_\tau = \sum_{\tau_1} \sum_{\tau_{>1}} q_\tau \leq 2^{c_1} \times 1 = 2^{c_1}. $$

We classify a transcript $\tau$ as ''bad'' if its cylinder is small, i.e., $q_\tau \leq 2^{-n/2^k}$. Since there are at most $2^{c_0}$ cylinders for any fixed $\tau_1$, the total measure of bad cylinders given $\tau_1$ is at most $2^{c_0} \cdot 2^{-n/2^k} = 2^{c_0 - n/2^k}$. The overall probability that the protocol generates a bad transcript is bounded by summing $q_\tau$ over all bad $\tau$:
$$ \Pr[\boldsymbol{\tau} \in \text{Bad}] = \sum_{\tau \in \text{Bad}} p_\tau \leq \sum_{\tau \in \text{Bad}} q_\tau \leq \sum_{\tau_1} 2^{c_0 - n/2^k} \leq 2^{c_1 + c_0 - n/2^k}. $$
We can safely assume $c_1 \leq \frac{1}{2} \Omega(n/2^k)$ (otherwise the desired bound $c_1 = \Omega(\sqrt{n_0})$ holds trivially). Given $c_0 = o(n/2^k)$, we have $c_1 + c_0 - n/2^k \leq -\Omega(n/2^k)$. Thus, $\Pr[\boldsymbol{\tau} \in \text{Bad}] \leq 2^{-\Omega(n/2^k)}$. With overwhelming probability $1 - 2^{-\Omega(n/2^k)}$, the transcript $\tau$ is ''good''.

For any valid good transcript $\tau$ (where $p_\tau > 0$ and $q_\tau > 2^{-n/2^k}$), $C_\tau$ is a large cylinder intersection. By Lemma \ref{extractor}, the target random variable $Z = g(x_2, \dots, x_k)$ is extremely uniform over $C_\tau$, bounded by $\Pr[Z=z \mid C_\tau] \leq 2^{-n_0} + 2^{-2n_0} \leq 2^{-n_0+1}$. Conditioned on the realized transcript $\boldsymbol{\tau} = \tau$, the probability of $Z$ scales by the density. Let $x = (x_2, \dots, x_k) \in \mathcal{X}$, we have:
\begin{align*}
\Pr[Z=z \mid \boldsymbol{\tau}=\tau] 
&= \frac{|\{ x \in S_{\tau_1} \cap C_\tau : g(x) = z \}|}{|S_{\tau_1} \cap C_\tau|} \\
&\leq \frac{|\{ x \in C_\tau : g(x) = z \}|}{|S_{\tau_1} \cap C_\tau|} \\
&= \frac{\Pr[Z=z \mid C_\tau] \cdot |C_\tau|}{|S_{\tau_1} \cap C_\tau|} \\
&= \frac{\Pr[Z=z \mid C_\tau] \cdot q_\tau}{p_\tau} \leq \frac{q_\tau}{p_\tau} 2^{-n_0+1}.
\end{align*}
Taking the logarithm, the conditional entropy given a specific good transcript is lower-bounded by:
$$ \ent(Z \mid \boldsymbol{\tau}=\tau) \geq n_0 - 1 - \log \frac{q_\tau}{p_\tau}. $$

We now calculate the expected conditional entropy over the random variable $\boldsymbol{\tau}$. Since entropy is non-negative, we can simply drop the contribution from bad transcripts:
$$ \ent(Z \mid \boldsymbol{\tau}) = \sum_{\tau} p_\tau \ent(Z \mid \boldsymbol{\tau}=\tau) \geq \sum_{\tau \in \text{Good}} p_\tau \left( n_0 - 1 - \log \frac{q_\tau}{p_\tau} \right). $$
By applying Jensen's inequality to the logarithmic term over the good transcripts:
$$ \sum_{\tau \in \text{Good}} p_\tau \log \frac{q_\tau}{p_\tau} \leq \log \left( \sum_{\tau \in \text{Good}} p_\tau \frac{q_\tau}{p_\tau} \right) \leq \log \left( \sum_{\tau} q_\tau \right) \leq \log(2^{c_1}) = c_1. $$
Thus, the expected conditional entropy is firmly bounded by:
$$ \ent(Z \mid \boldsymbol{\tau}) \geq \Pr[\boldsymbol{\tau} \in \text{Good}] (n_0 - 1) - c_1 \geq \left(1 - 2^{-\Omega(n/2^k)}\right)(n_0 - 1) - c_1 \geq n_0 - c_1 - 1 - o(1). $$

Since $Z$ is uniform over the entire space $\mathcal{X}$, its unconditional entropy is $\ent(Z) \geq n_0$. Therefore, the mutual information revealed by the protocol is:
$$ \mathrm{I}(Z ; \boldsymbol{\tau}) = \ent(Z) - \ent(Z \mid \boldsymbol{\tau}) \leq n_0 - (n_0 - c_1 - 1 - o(1)) = c_1 + 2.$$
This completes the proof.
\end{proof}


Now we focus on the second part of the proof. Recall that the last Player $k$ (acting as the referee) knows $x_1$ and the full transcript $\boldsymbol{\tau}$. Since $\Pi^*$ solves $\GHM * g$ with an expected error at most $1/16$, the transcript must contain sufficient information to solve the Hidden Matching problem. The following lemma provides the required information lower bound.

\begin{lemma}\label{entropy_loss}
Let $\Pi^{*}$ be any simplified protocol that solves $\GHM * g$ with an overall expected error at most $1/16$. Then, $\mathrm{I}(Z ; \boldsymbol{\tau}) = \Omega(\sqrt{n_0})$.
\end{lemma}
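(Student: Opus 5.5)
The plan is to adapt the Fourier-analytic lower bound of Bar-Yossef, Jayram and Kerenidis for one-way Hidden Matching to our setting. The key structural fact is Claim~\ref{symmetric_property}, item~\ref{3}: $\boldsymbol{\tau}$ is a function of $(\rv{x_2},\dots,\rv{x_k})$ alone, so $x_1$ (uniform on $[m]$) is independent of $(Z,\boldsymbol{\tau})$. The referee's output is a function of $(x_1,\boldsymbol{\tau})$ (its other visible inputs only affect $\boldsymbol{\tau}$ through $x_{-k}$; we may fix them by an averaging argument, or simply absorb them since the output must be correct for $Z$ which is determined by $\boldsymbol{\tau}$'s posterior). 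So for most transcripts $\tau$, given a uniformly random matching $M_{x_1}$, the referee outputs an edge $(\ell,r)\in M_{x_1}$ and a bit $b$ with $\Pr[b=Z_\ell\oplus Z_r\mid \boldsymbol{\tau}=\tau]$ large on average over $x_1$.

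Step 1: Translate success into bias. Since the error is at most $1/16$, by Markov at least half (by mass) of transcripts $\tau$ satisfy $\mathbb{E}_{x_1}\Pr[\text{error}\mid\tau,x_1]\le 1/8$. For such $\tau$, let $\mu_\tau$ be the posterior of $Z$. The referee's success means that for an average $i\in[m]$ there is an edge $e_i=(\ell_i,r_i)\in M_i$ with $|\widehat{\mu_\tau}(e_i)|\cdot 2^{n_0}=|\mathbb{E}_{\mu_\tau}[(-1)^{Z_{\ell_i}+Z_{r_i}}]|\ge 3/4$. The matchings $M_1,\dots,M_m$ are pairwise edge-disjoint (edge $(\ell,m+j)$ lies only in $M_{j-\ell}$), so this yields $\Omega(m)$ distinct weight-2 characters $\chi_{\{\ell_i,r_i\}}$ each with bias at least $3/4$ under $\mu_\tau$.

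Step 2: Bias implies entropy loss. Here I will use the level-2 inequality / KKL-type hypercontractive bound: for a distribution $\mu$ on $\{0,1\}^{n_0}$ with $\ent(\mu)\ge n_0-\delta$ (equivalently density $f=2^{n_0}\mu$ of relative entropy $\delta$), $\sum_{|S|=2}\mathbb{E}_\mu[\chi_S]^2=O(\delta^2)$ when $\delta\ge 1$ (more precisely $O(\max(\delta,1)^2)$, as in the BJK proof, from $\sum_{|S|=2}\hat f(S)^2\le O(\log(1/\text{density}))^2$). Since we have $\Omega(m)$ characters of squared bias $\ge 9/16$, we get $n_0-\ent(Z\mid\boldsymbol{\tau}=\tau)=\Omega(\sqrt{m})=\Omega(\sqrt{n_0})$ for each good $\tau$. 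The hypercontractive inequality requires a density bound rather than entropy; I expect to handle this as BJK do, either through their entropy-based version or by a standard truncation conditioning on the event that $\mu_\tau$ is not too spread, which is the main technical obstacle (getting a clean entropy-form level-2 inequality applicable to arbitrary posteriors).

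Step 3: Average. Since $\ent(Z\mid\boldsymbol{\tau}=\tau)\le n_0$ always and $\ent(Z)\ge n_0-o(1)$ (by Lemma~\ref{extractor} applied to the whole space, $Z$ is nearly uniform), averaging over the at least $1/2$ mass of good transcripts gives $\mathrm{I}(Z;\boldsymbol{\tau})=\ent(Z)-\ent(Z\mid\boldsymbol{\tau})\ge \tfrac12\Omega(\sqrt{n_0})-o(1)=\Omega(\sqrt{n_0})$, proving Lemma~\ref{entropy_loss}.
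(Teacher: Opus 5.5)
Modulo the point in the second paragraph, which the paper shares, your argument is correct. Your skeleton matches the paper's: Markov over transcripts, Markov over matchings, edge-disjointness of the $M_i$ to get $\Omega(m)$ distinct predicted edges, a per-transcript entropy deficit, then averaging. The step that turns many good edges into an entropy deficit is genuinely different.

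\textbf{The paper's conversion step.} It is elementary. It puts the $\ge n_0/4$ predicted edges of the low-error matchings into a bipartite graph and extracts a spanning forest $A_\tau$ with $|A_\tau|\ge\sqrt{|E'_\tau|}$, since a bipartite graph of rank $r$ has at most $r^2$ edges. Forest parities are linearly independent, so the BJK Fano-type lemma gives $\ent(u_Z\mid\boldsymbol{\tau}=\tau)\le|A_\tau|H_2(1/4)$. The chain rule then gives $\ent(Z\mid\boldsymbol{\tau}=\tau)\le n_0-|A_\tau|(1-H_2(1/4))$. This works verbatim for non-flat posteriors, so the obstacle you flagged never arises there.

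\textbf{Your conversion step.} You charge all biased characters against a level-2 inequality, and the entropy form you need holds without truncation. Let $f=2^{n_0}\mu_\tau$, $W=\sum_{|S|=2}\widehat{f}(S)^2$ and $h=f^{=2}/\sqrt{W}$. Donsker--Varadhan gives $\lambda\sqrt{W}=\lambda\,\mathbb{E}_{\mu_\tau}[h]\le(\ln 2)\,\delta_\tau+\ln\mathbb{E}_U[e^{\lambda h}]$. Bonami's bound $\|h\|_q\le q-1$ gives $\mathbb{E}_U[e^{\lambda h}]\le\sum_{q\ge 0}(e\lambda)^q=2$ at $\lambda=1/(2e)$. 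Hence $\sqrt{W}\le 2e\ln 2\,(\delta_\tau+1)$.

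\textbf{A small repair in Step 1.} ``Each with bias at least $3/4$'' holds only on average. Restrict, as the paper does, to the $\ge m/2$ matchings with conditional error $\le 1/4$. Each contributes a distinct edge of $M_i$ with bias $\ge 1/2$, so $W\ge m/8$ and $\delta_\tau=\Omega(\sqrt{n_0})$.

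\textbf{What each route buys.} Yours uses every good edge rather than a $\sqrt{\cdot}$-size forest, and has linear rather than quadratic dependence on the advantage. The paper's avoids hypercontractivity altogether. Your bound $\ent(Z)\ge n_0-o(1)$ via Lemma~\ref{extractor} is also more accurate than the paper's $\ent(Z)\ge n_0$.

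\textbf{The referee issue.} The parenthetical in your first paragraph is not a valid argument. In the problem definition the last player also sees $x_2,\dots,x_{k-1}$. So its output is a function of $(x_1,x_2,\dots,x_{k-1},\boldsymbol{\tau})$. Correctness certifies information about $Z$ in the posterior given $(\boldsymbol{\tau},x_2,\dots,x_{k-1})$, not given $\boldsymbol{\tau}$ alone. Averaging over those inputs produces $\mathrm{I}(Z;\boldsymbol{\tau}\mid x_2,\dots,x_{k-1})$, which is a different quantity.

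In fact, for $k=3$ the lemma as literally stated is false without a restriction here:
\begin{enumerate}
\item Pick $T=A\cup(m+B)$ with $|T|=O(\sqrt{m})$ and $B-A=\mathbb{Z}_m$, so $T$ contains an edge of every $M_i$.
\item Let player~1 send $(Z\oplus h(x_2))|_T$, where $h(x_2)$ is a fixed block of bits of $x_2$, and let everyone else stay silent.
\item The last player unmasks using $x_2$ and errs with probability $0$.
\item Since $Z$ is uniform conditioned on any $x_2\neq 0$, the message is essentially independent of $Z$, and $\mathrm{I}(Z;\boldsymbol{\tau})\approx 0$.
\end{enumerate}
This does not contradict the main theorem, since $c_1=\Theta(\sqrt{n_0})$ in this protocol.

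The paper's proof makes the same simplification silently (``the output of the protocol is a constant triple''), so you do not diverge from the paper here. You should, however, state it as an explicit assumption: the referee's output depends only on $(x_1,\boldsymbol{\tau})$, which holds for $k=2$. Do not claim it follows from averaging.
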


\begin{proof}[Proof of Lemma \ref{entropy_loss}]
Let $\mathcal{M} = \{M_1, \dots, M_m\}$ be the set of $m = n_0/2$ edge-disjoint perfect matchings. Recall that the transcript $\boldsymbol{\tau}$ is independent of the first player's input $x_1$ (which uniformly determines the matching $\boldsymbol{M} \in \mathcal{M}$). Therefore, for any fixed transcript $\tau$ and a specific chosen matching $M \in \mathcal{M}$, the output of the protocol is a constant triple $(i_M, j_M, b_M)$, where $(i_M, j_M) \in M$ is the predicted edge and $b_M \in \{0,1\}$ is the predicted parity. Since the output edge is guaranteed to be in $M$, an error occurs for a fixed $M$ if and only if $b_M \neq Z_{i_M} \oplus Z_{j_M}$.

Let $\varepsilon_\tau$ denote the overall error probability of the protocol conditioned solely on the transcript $\boldsymbol{\tau} = \tau$. Since the overall expected error is $\mathbb{E}_{\boldsymbol{\tau}}[\varepsilon_{\boldsymbol{\tau}}] \leq 1/16$, Markov's inequality implies that $\Pr[\boldsymbol{\tau} \in \mathcal{T}_{\text{good}}] \geq 1/2$, where $\mathcal{T}_{\text{good}}$ is defined as the set of transcripts satisfying $\varepsilon_\tau \leq 1/8$. 

Fix any good transcript $\tau \in \mathcal{T}_{\text{good}}$. Let 
$$\varepsilon_{\tau, M} = \Pr[b_M \neq Z_{i_M} \oplus Z_{j_M} \mid \boldsymbol{\tau} = \tau, \boldsymbol{M} = M]$$
denote the error conditioned on both the transcript $\tau$ and a specific matching $M$. 

Since the expected error over uniformly chosen matchings is $\mathbb{E}_{M \in \mathcal{M}}[\varepsilon_{\tau, M}] = \varepsilon_\tau \leq 1/8$, applying Markov's inequality again yields a subset of matchings $\mathcal{M}'_\tau := \{ M \in \mathcal{M} : \varepsilon_{\tau, M} \leq 1/4 \}$ of size $|\mathcal{M}'_\tau| \geq m/2 = n_0/4$. We construct a simple bipartite graph $G_\tau' = (L, R, E_\tau')$ whose edges are exactly the constant predictions $(i_M, j_M)$ made under each $M \in \mathcal{M}'_\tau$. Because the original matchings are mutually edge-disjoint, $G_\tau'$ contains exactly $|E_\tau'| = |\mathcal{M}'_\tau| \geq n_0/4$ edges.

By Tur\'an's theorem, a simple bipartite graph containing a spanning forest of rank $r$ spans at most $2r$ vertices and thus has at most $r^2$ edges. Consequently, the rank of $G_\tau'$ must be at least $\sqrt{|E_\tau'|} \geq \sqrt{n_0/4}$. Let $A_\tau \subseteq E_\tau'$ be such a spanning forest, which provides $|A_\tau| = \Omega(\sqrt{n_0})$ linearly independent edges.

Let $u_Z, v \in \{0,1\}^{|A_\tau|}$ denote the true parities and the predicted parities on $A_\tau$, respectively. By construction, each edge $e \in A_\tau$ originates from a unique matching in $\mathcal{M}'_\tau$; let $M_e$ denote this matching. Since every matching in $\mathcal{M}'_\tau$ has a conditional error of at most $1/4$, the probability of predicting the wrong parity for edge $e$ is exactly $\varepsilon_{\tau, M_e} \leq 1/4$. By linearity of expectation, the expected Hamming distance between the true and predicted parities on $A_\tau$ is bounded by:
$$ \mathbb{E}_{Z}[ d_H(u_Z, v) \mid \boldsymbol{\tau} = \tau ] = \sum_{e \in A_\tau} \varepsilon_{\tau, M_e} \leq \frac{1}{4} |A_\tau|. $$

Next, we apply the following lemma, a generalization of Fano's inequality due to Bar-Yossef, Jayram, and Kerenidis~\cite{bar2004exponential}, to upper bound the entropy of $u_Z$.

\begin{lemma}[\cite{bar2004exponential}]\label{fano}
Let $\rv{W}$ be a random variable over $\{0,1\}^k$, and suppose there exists a fixed vector $v \in \{0,1\}^k$ such that $\mathbb{E}[d_H(\rv{W}, v)] \leq \epsilon \cdot k$ for some $0 \leq \epsilon \leq 1/2$. Then the entropy of $\rv{W}$ is bounded by,
\[
\ent(\rv{W}) \leq k \cdot H_2(\epsilon),
\]
where $H_2(\cdot)$ is the binary entropy function.
\end{lemma}

Applying the generalized Fano's inequality (Lemma \ref{fano}) with $k = |A_\tau|$ and error bound $1/4$, we have $\ent(u_Z \mid \boldsymbol{\tau}=\tau) \leq |A_\tau| H_2(1/4)$. Furthermore, because the forest $A_\tau$ is acyclic, its parities impose $|A_\tau|$ independent linear constraints on $Z$, leaving exactly $n_0 - |A_\tau|$ bits of uncertainty for any fixed $u_Z$. By the chain rule,
$$ \ent(Z \mid \boldsymbol{\tau}=\tau) \leq \ent(u_Z \mid \boldsymbol{\tau}=\tau) + \ent(Z \mid u_Z, \boldsymbol{\tau}=\tau) \leq |A_\tau| H_2(1/4) + n_0 - |A_\tau| = n_0 - \Omega(\sqrt{n_0}). $$

Taking the expectation over $\boldsymbol{\tau}$ and using the trivial bound $\ent(Z \mid \boldsymbol{\tau}=\tau) \leq n_0$ for bad transcripts, the overall conditional entropy is bounded by:
$$ \ent(Z \mid \boldsymbol{\tau}) \leq \frac{1}{2} (n_0) + \frac{1}{2} \left( n_0 - \Omega(\sqrt{n_0}) \right) = n_0 - \Omega(\sqrt{n_0}). $$

Finally, since the unconditional entropy of the target is $\ent(Z) \geq n_0$, the mutual information revealed by the protocol is precisely lower-bounded:
$$ \mathrm{I}(Z ; \boldsymbol{\tau}) = \ent(Z) - \ent(Z \mid \boldsymbol{\tau}) \geq \Omega(\sqrt{n_0}). $$
This completes the proof.
\end{proof}

\normalem
\bibliographystyle{alpha}
\bibliography{reference.bib}

\appendix
\newpage
\section*{Appendix}
\paragraph{Quantum protocols for $\GHM*g$ \cite{bar2004exponential}:}
We present a quantum protocol for lifted hidden matching problem with communication complexity of $O(\log n)$ qubits. and $x_2,...,x_k \in \{0,1\}^n$ be the first player’s input and $x_1,...,x_{k-1} \in \{0,1\}^n$ be the last player’s input.

Let $z=(z_1,...,z_{n_0}) = g(x_2,...,x_k)$.
\begin{enumerate}
    \item The first player  sends the state \( |\psi\rangle = \frac{1}{\sqrt{n}} \sum_{i=1}^{n_0} (-1)^{z_i} |i\rangle. \)
    
    \item The last player performs a measurement on the state \( |\psi\rangle \) in the orthonormal basis
    \[
    B = \left\{ \frac{1}{\sqrt{2}} (|k\rangle \pm |\ell\rangle) \mid (k, \ell) \in M_{x_1)} \right\}.
    \]
\end{enumerate}

The probability that the outcome of the measurement is a basis state \( \frac{1}{\sqrt{2}} (|k\rangle + |\ell\rangle) \) is
\[
|\langle \psi | \frac{1}{\sqrt{2}} (|k\rangle + |\ell\rangle) \rangle|^2 = \frac{1}{2n} \left( (-1)^{z_k} + (-1)^{z_\ell} \right)^2.
\]

This equals $2/n$ if $z_k \oplus z_\ell = 0$ and $0$ otherwise. Similarly, for the states \( \frac{1}{\sqrt{2}} (|k\rangle - |\ell\rangle) \), we have that
\[
|\langle \psi | \frac{1}{\sqrt{2}} (|k\rangle - |\ell\rangle) \rangle|^2 = 0 \quad \text{if } z_k \oplus z_\ell = 0, \text{ and } \frac{2}{n} \text{ if } z_k \oplus z_\ell = 1.
\]

Hence, if the outcome of the measurement is a state \( \frac{1}{\sqrt{2}} (|k\rangle + |\ell\rangle) \), then the last player knows with certainty that $z_k \oplus z_\ell = 0$ and outputs \( \langle k, \ell, 0 \rangle \). If the outcome is a state \( \frac{1}{\sqrt{2}} (|k\rangle - |\ell\rangle) \), then the last player knows with certainty that $z_k \oplus z_\ell = 1$ and hence outputs \( \langle k, \ell, 1 \rangle \). Note that the measurement depends only on the last player’s input and that the algorithm is $0$-error.

\end{document}